\newcommand{\cA}{{\cal A}}
\newcommand{\cB}{{\cal B}}
\newenvironment{proof}{\noindent {\bf Proof.}}{}
\newtheorem{theorem}{Theorem}[section]
\newtheorem{lemma}{Lemma}[section]
\newtheorem{remark}{Remark}[section]
\newcommand{\thp}{treasure hunt problem}
\newcommand{\colb}[1]{{\it \color{blue}{#1}}}
\begin{document}

\bibliographystyle{plain}

\title{{\bf Pebble Guided Near Optimal Treasure Hunt in Anonymous Graphs} \footnotemark[1]}

%
%
\author{Barun Gorain\footnotemark[2]
\and
Kaushik Mondal\footnotemark[3] \and
Himadri Nayak\footnotemark[4]
\and
Supantha Pandit\footnotemark[5]
}
%
\def\thefootnote{\fnsymbol{footnote}}

\footnotetext[1]{
\noindent
Preliminary version of this paper is accepted in SIROCCO 2021.
}
\footnotetext[2]{
\noindent
Indian Institute of Technology Bhilai, Raipur, Chattisgarh, India. {\tt barun@iitbhilai.ac.in}
}
\footnotetext[3]{
\noindent
Indian Institute of Technology Ropar, Rupnagar, Punjab, India. {\tt kaushik.mondal@iitrpr.ac.in}
}

\footnotetext[4]{
\noindent
Indian Institute of Information Technology, Bhagalpur, India {\tt himadri@iiitbhagalpur.ac.in}
}
\footnotetext[5]{
\noindent
Dhirubhai Ambani Institute of Information and Communication Technology, Gandhinagar, Gujrat, India. {\tt pantha.pandit@gmail.com}
}

\maketitle              
\begin{abstract}
We study the problem of treasure hunt in a graph by a mobile agent. The nodes in the graph are anonymous and the edges at any node $v$ of degree $deg(v)$ are labeled arbitrarily as $0,1,\ldots, deg(v)-1$. A mobile agent, starting from a node, must find a stationary object, called {\it treasure} that is located on an unknown node at a distance $D$ from its initial position. The agent  finds  the treasure when it reaches the node where the treasure is present. The {\it time} of treasure hunt is defined as the number of edges the agent visits before it finds the treasure. The agent does not have any prior knowledge about the graph or the position of the treasure. An Oracle, that knows the graph, the initial position of the agent, and the position of the treasure, places some pebbles on the nodes, at most one per node, of the graph to guide the agent towards the treasure.

We target to answer the question: what is the fastest possible treasure hunt algorithm regardless of the number of pebbles are placed?

We show an algorithm that uses $O(D \log \Delta)$ pebbles to find the treasure in a graph $G$ in time $O(D \log \Delta + \log^3 \Delta)$, where $\Delta$ is the maximum degree of a node in $G$ and $D$ is the distance from the initial position of the agent to the treasure. We show an almost matching lower bound of $\Omega(D \log \Delta)$ on time of the treasure hunt using any number of pebbles.


\end{abstract}
{\bf Keywords:} treasure hunt, mobile agent, anonymous graph, pebbles

\section{Introduction}

\subsection{Model and Problem Definition}  Treasure hunt by a mobile agent is a well studied problem in networks and related areas. A mobile agent, starting from an initial position, has to find a stationary object, called treasure. In practice, a treasure can be a missing person in a dark cave and a mobile robot must find the person. In networks applications, a software agent must find  a computer virus or valuable data resource in a computer connected in a network.

The network is modeled as a graph where the nodes are unlabeled. The edges at a node $v$ of degree $deg(v)$ are labeled as $0,1,\ldots,deg(v)-1$ arbitrarily. Thus, each edge has two port numbers associated with it at each of its incident vertices. A mobile agent, starting from a node, must find the treasure which is situated in an unknown node at distance $D$. The agent have no prior knowledge  about the network or the value of $D$. The agent finds the existence of the treasure only when it reaches the node where the treasure is situated. The agent moves according to a deterministic algorithm where at each node, it chooses a port and move to the next node using the chosen port. At the start, the agent only knows the degree of the initial node. From a node $u$, when the agent reaches node $v$ by using the port $p$ at $u$, it learns the degree of the node $v$, and the port $q$ at $v$ through which it reaches $v$. For a network with maximum degree $\Delta$, using a simple depth first search based algorithm, the agent can find the treasure in time $O(\Delta^D)$. But many practical applications required a faster treasure hunt algorithm. For example, consider the application of finding a person inside a mine by a mobile robot. The person may be lost and injured due to a sudden accident and therefore he or she must be found as fast as possible. In such scenarios, some external help is provided to the robot/agent in order to guide it towards the desired location faster.
Providing such external help is often done by an oracle which gives some additional information in the form of a binary string, called {\it advice} \cite{GorainP19,Avery2014}. The oracle gives this advice to the agent a priori. Using the information provided in the advice, the agent finds the treasure.
However, if the treasure hunt has to be done  by sufficiently large number of agents independently over a long duration, providing such advice to each of the agents might be costly. Instead, the same purpose can be served by providing information to the nodes of the networks only once. Each agent, while visiting a node, learns this information and find the treasure using it. In this paper, we consider a very simple scenario where a pebble can be placed at the nodes as an external information. An agent, looking at the placement of the pebbles, learns in which way it must traverse to find the treasure. To be specific, we consider the problem where some pebbles are placed  in some nodes of the network by an oracle who knows the initial position of the agent and the position of the treasure. The position of the pebbles guides the agent towards the treasure. At any node, at most one pebble can be placed. The agent can see a pebble only after reaching that node.
In this paper, we study what is the fastest possible algorithm for treasure hunt in anonymous graphs with pebbles. To be specific,
we aim design the fastest algorithm for treasure hunt when any number of pebbles can be placed in the network.

\subsection{Our Result}

\begin{itemize}
    \item[\ding{228}] We present an algorithm that finds the treasure in an anonymous graph in  $O(D \log \Delta+\log ^3 \Delta)$-time using $O(D \log \Delta)$ pebbles, where $\Delta$ is the maximum degree of a node in the graph and  $D$ is the distance from the initial position of the agent to the treasure.


\item[\ding{228}] We prove that even if we supply any number of pebbles, any algorithm must require $\Omega(D \log \Delta)$-time to find the treasure in an anonymous graph.
\end{itemize}

\subsection{Related Work}

Treasure hunt by a mobile agent is a well studied problem \cite{Beck1970,Bouchard20,Bouchard2020,Demaine2006,Emek2015,Kao1996,Avery2014,Amnon2014} for last few decades. In \cite{Beck1970}, Beck et al. introduced the problem of deterministic treasure hunt on a line. The authors proposed a deterministic algorithm with competitive ratio 9 and proved that this ratio is the best that can be acheived in case of line. A generalized version of \cite{Beck1970} was studied by Demaine et al. \cite{Demaine2006}
by considering cost of turns that agent makes along with the cost of the trajectory. Bouchard et al. \cite{Bouchard2020} considered the treasure hunt problem in plane and showed a much improved bound with the assumption of angle information.

In the book \cite{Alpern2003}, several problems related to treasure hunt are discussed. Most of the algorithms surveyed in this book are randomized. One such is the randomized treasure hunt in a star, where the treasure is present in one of the $m$ rays passing through a common point \cite{Kao1996}. In \cite{Avery2014,Amnon2014}, it is shown that the problem of treasure hunt and the problem of rendezvous in graphs are closely related. Ricardo et al. studied the problem of finding an unknown fixed point on a line and in a grid \cite{Ricardo1993}. More generalized studies are done in \cite{Artur2009,Langetepe2012}, where the objective is to search an unknown line in a plane. The author studied the problem of finding a target in a ring in \cite{Spieser2012} by multiple selfish agents participates and a game theoretic solution is proposed. Also, treasure hunt in a plane and in a grid by multiple agents are studied in \cite{Emek2015,Fricke2016,Langer2015}. In \cite{Emek2015,Langer2015}, the agents are considered to have bounded memory. Treasure hunt in plane is studied in \cite {Pelc19arx1} in the advice model. Treasure hunt in a tree network is studied in \cite{Lucas2018}, where random faulty hints are provided to the agents. Treasure hunt in arbitrary graph is also studied in \cite{Bouchard20} considering the agent has unlimited memory. The game of pursuit-evasion, a closely related problem to treasure hunt, is considered in \cite{Bonato2011,Chung2011}, where set of pursuers try to catch a fugitive trying to escape. Also treasure hunt in terrain in presence of obstacles was introduced in \cite{Pelc18arx2}. Another related problem is graph exploration and Disser et. al. \cite{Disser19} recently proved a tight bound on number of pebbles required for a single mobile agent with constant memory to explore an undirected port labelled graph.

\section{Treasure Hunt Algorithm}

In this section, we provide an $O(D \log \Delta+\log^3 \Delta)$-time algorithm for the \thp~using $O(D\log \Delta)$  pebbles.

Let $G$ be a graph with maximum degree $\Delta \ge 2^{10}$. If $\Delta<2^{10}$, then the algorithm described for the case when all nodes on the path from $s$ to $t$ are of `small' degree (these small degree nodes are defined as light nodes later) can be applied.

Let $s,t \in G$ be the starting point of the agent, and position of the treasure in $G$, respectively. Let $P$ be the shortest path between $s$ and $t$ of length $D$. Without loss of generality, we assume that the degree of the node $s$ is at least 2. Otherwise, the first degree-3 node along $P$ starting from $s$ can be considered as the starting position of the agent.
For any node $v \in G$, by $deg(v)$, we denote the degree of the node $v$. Let $\alpha_v=1+ \lfloor \log deg(v)\rfloor$. For any node $w \in G$, by $w(0), w(1), \ldots, w(deg(w)-1)$, we denotes the neighbors of $w$ that are connected through port numbered $0,1, \ldots, deg(w)-1 $, at $w$ respectively.
For any two strings $\Gamma_1$ and $\Gamma_2$, by `$\Gamma_1\cdot \Gamma_2$', we mean concatenation of $\Gamma_1$ and $\Gamma_2$. For any binary string $\Gamma$, by $\Gamma(i,j)$, we denote the substring of $\Gamma$ starting from the $i$-th bit of $\Gamma$ to the $j$-th bit of $\Gamma$. For two nodes $u,v \in G$, we denote the shortest distance between $u$ and $v$ in $G$ by $dist(u,v)$.

Before providing the formal description of the algorithm, we give a high level idea of the pebble placement and discuss how these pebbles guide the agent towards the treasure. To help the reader understand the algorithm better, we describe the idea for trees first and then generalize this idea for general graphs.


\noindent {\bf High level idea of the algorithm in a tree network:} Here we assume that $G$ is a rooted tree with root $s$.
Let $L_i$ be the set of nodes that are at distance $i$ from $s$. Let $ P=(s=)~v_0,v_1, \ldots, v_{D-1}, v_D~(=t)$ be the path from $s$ to $t$, and let $p_0,p_1, \ldots, p_{D-1}$ be the sequence of port numbers corresponding to the path $P$ such that from the node $v_\ell$, the node $v_{\ell+1}$ can be reached by taking the edge with port number $p_\ell$.
For any node $v \in L_j$, let its $i$-th neighbor be the adjacent node in $L_{j+1}$ to which $v$ is connected via $i$-th largest port going to $L_{j+1}$. The pebbles are placed at the children of the nodes in $P$ such that the placement of the pebbles corresponds the binary representation of the port numbers along the shortest path from the current node. To be more specific, let $b_0b_1 \ldots b_{m-1}$ be the binary representation of the integer $p_j$, where $m= 1+ \lfloor\log deg(v_j) \rfloor$. For $0 \le i \le m-1$, place a pebble at $i$-th neighbor of $v_j$, if $b_i=1$. Hence, from the point of view of a node $v_j \in L_i$, each of its neighbors in $L_{i+1}$ either contains a pebble or does not contain a pebble. Visiting each of the neighbor in increasing order of the port numbers and ignoring the port that connects to the parent of the current node, the agent, from the current node, can learn the binary representation of the integer $p_j$, by realizing a node with pebble as `1' and a node without pebble as `0'. Hence, the placement of pebbles in the above manner helps us to ``\colb{encode}" the port labeled path from $s$ to $t$ and the agent, with the help of this encoding, learn the port numbers from each node in $P$ that leads to the next node towards the treasure. The difficulty here for the agent is while learning this binary encoding by looking at the pebble placement, the agent must learn when this binary encoding ends, as the nodes, which are not used for encoding (no pebbles are placed on these nodes) can be misinterpreted as zeros. To overcome this difficulty, instead of simple binary encoding. we use a \colb{transformed binary encoding}: replacing every ‘1’ by ‘11’ and every ‘0’ by’ `10' in the standard binary encoding.  The  advantage  of  this  transformed encoding  is  that  it  does  not contain the substring `00'. Hence, as soon as the agent sees two consecutive node without pebbles, it realizes that the binary string that is encoded in the nodes is ended.

\noindent {\bf Extending the idea for general graphs:} The above method of pebble placement for trees can not be directly extended to general graphs. This is because in a rooted tree, no two nodes have common children. Hence, once the encoding is done after placing the pebbles, the agent can unambiguously decode this encoding. However, in the case of graphs, two consecutive nodes on $P$ may have neighbors in common. Hence, if a pebble is placed on such a common neighbor, the agent can not distinguish for which node the pebble is placed. Also, since the nodes are anonymous, there is no way the agent can identify whether it is visiting a node that is a common neighbor to the previous or next node. We resolve this difficulty by encoding in the neighbors of a set of ``\colb{high}" degree nodes that do not share neighbors. These nodes are called \colb{milestones} (we define this formally later).  The details are  explained below.

We say that a node is \colb{heavy} if its degree is at least $80\lfloor\log \Delta\rfloor+106$ (the reason for choosing this magical number is discussed in Remark \ref{rem:1}).  Otherwise, we say that the node is \colb{light}. Let $T$ be the \colb{breadth first search (BFS)}  tree, rooted at $s$, and for $0\le i\le D$, let $L_i$ denotes the set of nodes that are at a distance $i$ from $s$. Clearly, $s \in L_0$, and $t \in L_D$.

Let $P=(s=)~v_0,,v_1, \ldots, v_{D-1}, v_D~(=t)$, be the shortest path from $s$ to $t$, and let $p_0,p_1, \ldots, p_{D-1}$ be the sequence of port numbers corresponding to the path $P$ such that from the node $v_\ell$, the node $v_{\ell+1}$ can be reached by taking the edge with port number $p_\ell$, for $0 \le \ell<D$. For $0 \le i \le D-1$, let $B_i$ be the binary representation of the integer $p_i$ of length $x_i$, where $x_i=1+ \lfloor \log deg(v_i)\rfloor$.

First, consider a special case where each $v_i$, $0\le i \le D-1$, is light. A  simple algorithm will work in this case: place a pebble at each of the nodes $v_i$, for $1\le i \le D-1$. The agent, at the starting node $s$, set $CurrentNode=s$. At each step, the agent visits all the neighbors of the $CurrentNode$ and move to the neighbor $v$, (except the node from where it reaches to $CurrentNode$) that contains a pebble. It then sets $CurrentNode=v$. The agent continues to explore in this way until the treasure is found. Since all the nodes on $P$ are light, the time for treasure hunt is $O(D \log \Delta)$. If not all nodes are light on $P$, then a set of nodes called \colb{milestones} are used to code the sequence of port numbers corresponding to the path $P$. We define a set of nodes as milestones in a recursive manner. To define the first milestone, we consider the following four cases based on the position of the heavy nodes in the BFS tree $T$: i)  the node $s$ is heavy, ii) the node $s$ is light and a node in $L_1$ is heavy, iii) all the nodes in $L_0 \cup L_1$ are light and a node in $L_2$ is heavy, and iv) all the nodes in $L_0 \cup L_1 \cup L_2$ are light and $j \ge 3$ is the smallest integer for which $L_j$ contains a heavy node that is at a distance 3 from $v_{j-3}$. From now onward, we distinguish these four cases as Case H, Case L-H, Case L-L-H, and Case L-L-L, respectively.

The first milestone is defined based on the above four cases as follows.

\begin{itemize}
\item{\bf Case H:}   $milestone_1=s$.
\item{\bf Case L-H:}  Let $w$ be the node with maximum degree in $L_1$. If multiple nodes with maximum degree exists, then $w$ is chosen as the node to which $s$ is connected by the edge with minimum port number. Set $milestone_1=w$.
\item{\bf Case L-L-H:}  Let $w$ be the node with maximum degree in $L_2$. If multiple nodes with maximum degree exists, then $w$ is chosen as the node to which $s$ is connected by a path of length 2 which is lexicographically shortest among all other paths to  nodes with same degree. Set $milestone_1=w$.
\item{\bf Case L-L-L:}
If $j=4$, and $v_1 \in \{s(0),s(1)\}$, then consider the set  $W=\{ w \in L_{j}| dist(s(0),w)=3 ~or~ dist(s(1),w)=3\}$.  Let $w$ be the node in $W$ whose degree is maximum among all nodes in $W$.  If multiple nodes with maximum degree exists, then the node with maximum degree to which $s$ is connected via  lexicographically shortest path of length 4 is chosen as $w$. Set $milestone_1=w$. On the other hand, If $j>4$ or $v_1 \not \in \{s(0),s(1)\}$, let  $W=\{ w \in L_{j} |dist(v_{j-3},w)=3\}$. Let $w$ be the node with maximum degree in $W$. If multiple nodes with maximum degree exists then $w$ is chosen as the node to which $v_{j-3}$ is connected by a path of length 3 that is lexicographically shortest among all other nodes with same degree. Set $milestone_1=w$.
\end{itemize}

The subsequent milestones are defined recursively as follows. For $i \ge 1$, let $milestone_i \in L_j$. The first heavy node on $P$ that is at a distance at least 3 from $v_j$ is defined as $milestone_{i+1}$.

Intuitively, we encode in the neighbors of the milestones in a similar fashion as described above for the case of tree network. As the milestones are at least 3 distance apart, no two of them have common neighbors and hence decoding can be done unambiguously. However, there is another difficulty, that is the agent does not have any knowledge about the graph and hence does not know the value of $\Delta$, the maximum degree of the graph. This restrict the agent to learn whether a node is heavy or light just by looking at its degree. We overcome this difficulty by placing some `\colb{markers}'. By looking at these markers the agent can identify the possible position of the first milestone. Once the agent identify the first milestone, finding the other milestones are easy as the path towards the next milestone is carefully coded at the neighbors of the current milestone. If $s$ itself is a heavy node, then it is defined as the first milestone. In order to help the agent to learn that this is the case, two pebbles are placed at $s(0)$, and $s(1)$, one at each. The agent, at the beginning of the treasure hunt algorithm, first visits the first two neighbors of $s$, and learn that $s$ is heavy, if it sees pebbles at both of these nodes. If $s$ is light and a node in $L_1$ is heavy, then a pebble is placed at $s(0)$. While vising first two neighbors of $s$, the agent identify the Case L-H  by finding a pebble at $s(0)$, and not finding any pebble at $s(1)$. Once the agent identify the Case L-H, it reach to the node connected to $s$ with maximum degree and hence reach to the first milestone in this case. For Case L-L-H, a pebble is placed at $s(1)$. In similar fashion as earlier, the agent identify the Case L-L-H, by finding a pebble at $s(1)$, and not finding any pebble at $s(0)$. It then reached to the milestone by exploring all paths of length 2 from $s$ and finding the node with maximum degree. For the Case L-L-L, no pebbles are placed in either of the first two neighbors of $s$ and the agent can identify  this case by visiting $s(0)$ and $s(1)$ and not finding any pebbles in these nodes. We explain how the agent identify the first milestone in this case later where formal description of the algorithm is provided.

Another set of `markers' are used to indicate the distance between two consecutive milestones. How these markers are placed will be explained later where formal descriptions of the pebble placements is provided.

We are ready to give the formal description of the treasure hunt algorithm.

\subsubsection{Pebble Placement:}
The placement of pebbles are done in three phases.

\noindent {\boldmath{\bf Phase 1: Placing pebbles  in $s(0)$ and $s(1)$}}
\begin{itemize}
\item {\bf Case H:} Place one pebble each at $s(0)$, and $s(1)$.
\item{\bf Case L-H:} Place a pebble at $s(0)$.
\item{\bf Case L-L-H:} Place a pebble at $s(1)$.
\item{\bf Case L-L-L:} No pebble is placed in $s(0)$, and $s(1)$ is this case.
\end{itemize}

\noindent {\boldmath{\bf Phase 2: Placing pebbles to encode the path between $milestone_1$ to $milestone_2$}}

\begin{description}

\item {\bf Case H:}  Here $milestone_1$ is the node $s$ itself. Notice that, two pebbles are already placed at $s(0)$ and $s(1)$ during phase 1 to represent the marker corresponding to Case H. The other neighbors of $s$ are used to encode the path say $P'$ from $s$ to $milestone_2$. To be more specific, if the distance from $s$ to $milestone_2$ is at most 5, then the entire path $P'$ is coded in the neighbors of $s$. Otherwise, if the distance from $s$ to $milestone_2$ is more than 5, then the first three port numbers and the last three port numbers are encoded at the neighbors of $s$. The difficulty here is to make the agent learn that how far $milestone_2$ is from $s$. To overcome this situation, the third neighbor $s(2)$ and the fourth neighbor $s(3)$ of $s$ are used  as `markers' to represent the distance between $s$ to $milestone_2$. To denote that $milestone_2$ is at distance 3, two pebbles are placed on $s(2)$ and $s(3)$ to encode the marker `11'. To denote $milestone_2$ is at distance 4, a pebble is placed on $s(2)$ and no pebble is placed on $s(3)$ to encode the marker `10'. To denote $milestone_2$ is at distance 5, a pebble is placed on $s(3)$ and no pebble is placed on $s(2)$ to encode the marker `01'.  Finally, no pebbles are placed either on $s(2)$ or on $s(3)$ to encode the marker `00' that indicates that $milestone_2$ is at least 6 distance apart from $milestone_1$.
Note that in this case, ($milestone_2$ is at distance more than 5 from $milestone_1$) the entire path from $milestone_1$ to $milestone_2$ is not coded. Instead, the first three and the last three sequence of port numbers are coded in the neighbors of $milestone_1$. Once the agent learns this coding, can compute the first three ports and
The light nodes in between are used to guide the agent towards $milestone_2$.


Once the pebble corresponding to markers are placed, a few other neighbors of $s$ are used to encode the sequence of port numbers following which the agent reaches  $milestone_2$ from $milestone_1$.
The formal pebbles placement is described based on the distance of the second milestone as follows. Let $\ell$ be an integer such that $milestone_2=v_\ell$.

\begin{enumerate}
\item $\left[\ell=3\right]$: Here the path from $milestone_1$ to $milestone_2$ is $v_0,v_1,v_2,v_3$ and the corresponding sequence of port numbers is $p_0,p_1,p_2$. Let $\Gamma'= B_0 \cdot B_1 \cdot B_2$ and $\Gamma= 11 \cdot \Gamma'$.

\item $\left[\ell=4\right]$: Here the path from $milestone_1$ to $milestone_2$ is $v_0,v_1,v_2,v_3,v_4$ and the corresponding sequence of port numbers is $p_0,p_1,p_2,p_3$. Let $\Gamma'= B_0 \cdot B_1 \cdot B_2 \cdot B_3$ and $\Gamma= 10 \cdot \Gamma'$.

\item $\left[\ell=5\right]$: Here the path from $milestone_1$ to $milestone_2$ is $v_0,v_1,v_2,v_3,v_4,v_5$ and the corresponding sequence of port numbers is $p_0,p_1,p_2,p_3,p_4$. Let $\Gamma'= B_0 \cdot B_1 \cdot B_2 \cdot B_3 \cdot B_4$ and $\Gamma= 01 \cdot \Gamma'$.

\item $\left[\ell \ge 6\right]$:  Here the path from $milestone_1$ to $milestone_2$ is $v_0,v_1, \ldots v_\ell$ and the corresponding sequence of port numbers is $p_0,p_1,\ldots p_\ell$. As mentioned earlier, the sequence of port numbers $p_0,p_1,p_2,p_{\ell-3},p_{\ell-2},p_{\ell-1}$ is coded in this case.
Let $\Gamma'= B_0 \cdot B_1 \cdot B_2 \cdot B_{\ell-3} \cdot B_{\ell-2} \cdot B_{\ell-1}$ and  $\Gamma= 00 \cdot \Gamma'$.
\end{enumerate}
Let $\hat{\Gamma}$ be the transformed binary encoding of $\Gamma$.
Let $z$ be the length of the string $\hat{\Gamma}$.
For $1\leq i\leq z$, place a pebble on $s(1+i)$, if the $i$-th bit of $\hat{\Gamma}$ is 1.
If $\ell \ge 7$, then place a pebble on each of the nodes $v_{4}, \ldots, v_{\ell-3}$.

Figure \ref{fig-h} shows the pebble placement for Case H.

\begin{figure}[ht!]
\begin{center}
{\subfigure[ ]{\includegraphics[scale=.45]{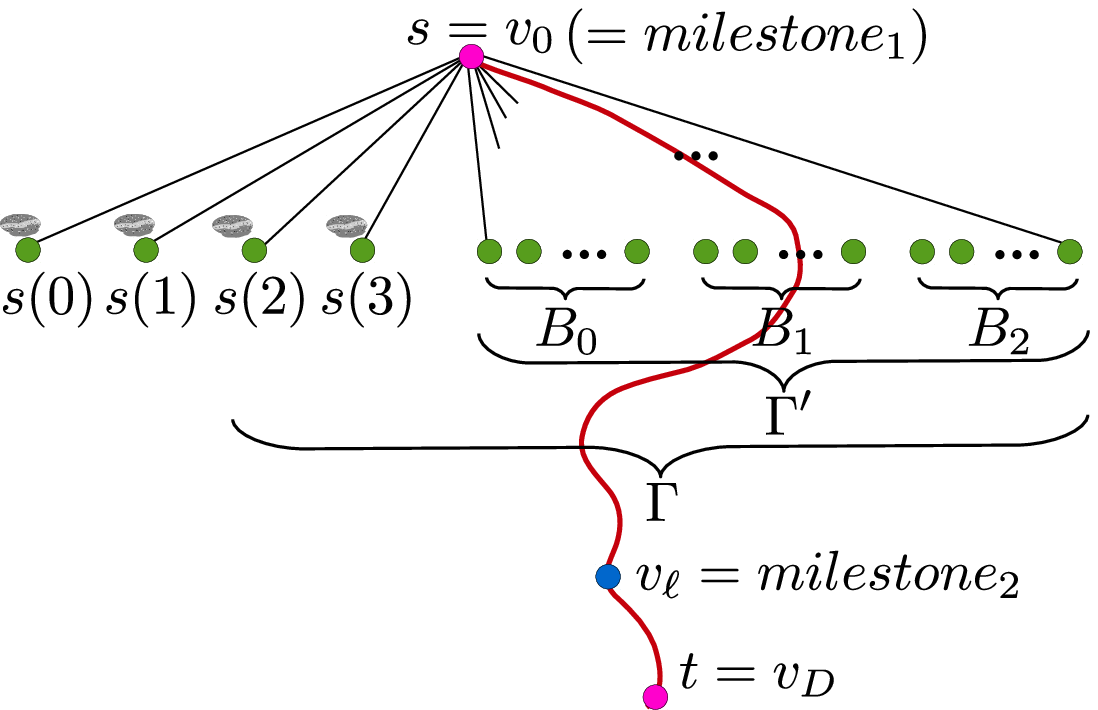}
\label{fig-h-1}
}}
\hspace{.0cm}
{\subfigure[ ]{\includegraphics[scale=.45]{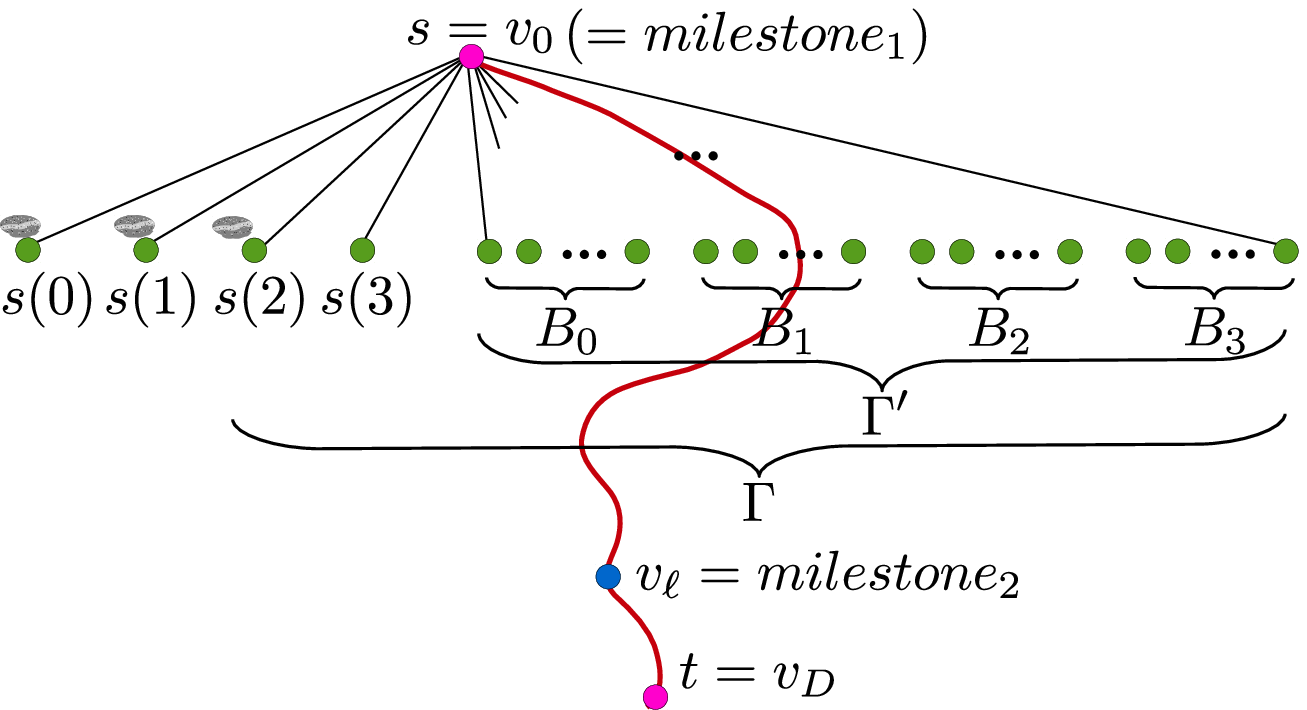}
\label{fig-h-2}
}}

{\subfigure[ ]{\includegraphics[scale=.45]{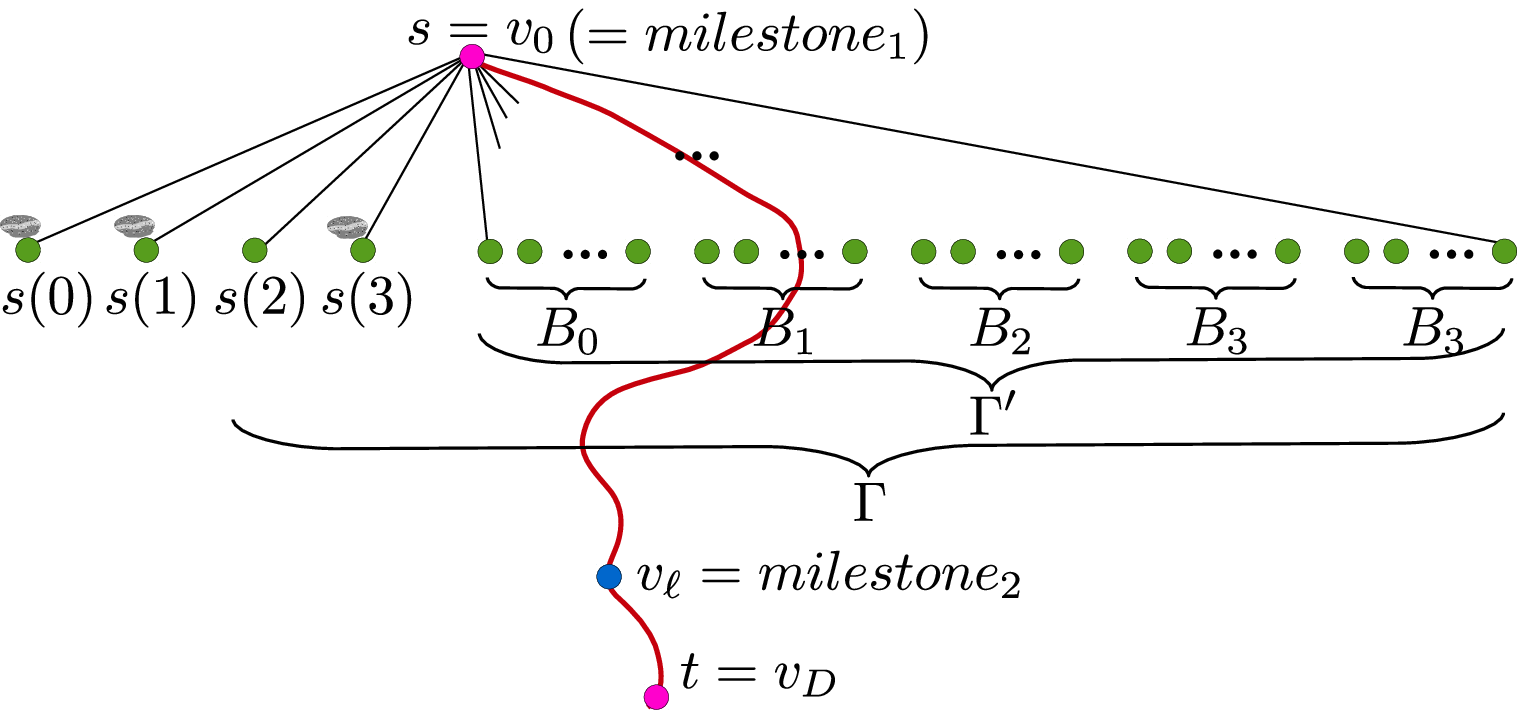}
\label{fig-h-3}
}}
\hspace{.1cm}
{\subfigure[ ]{\includegraphics[scale=.45]{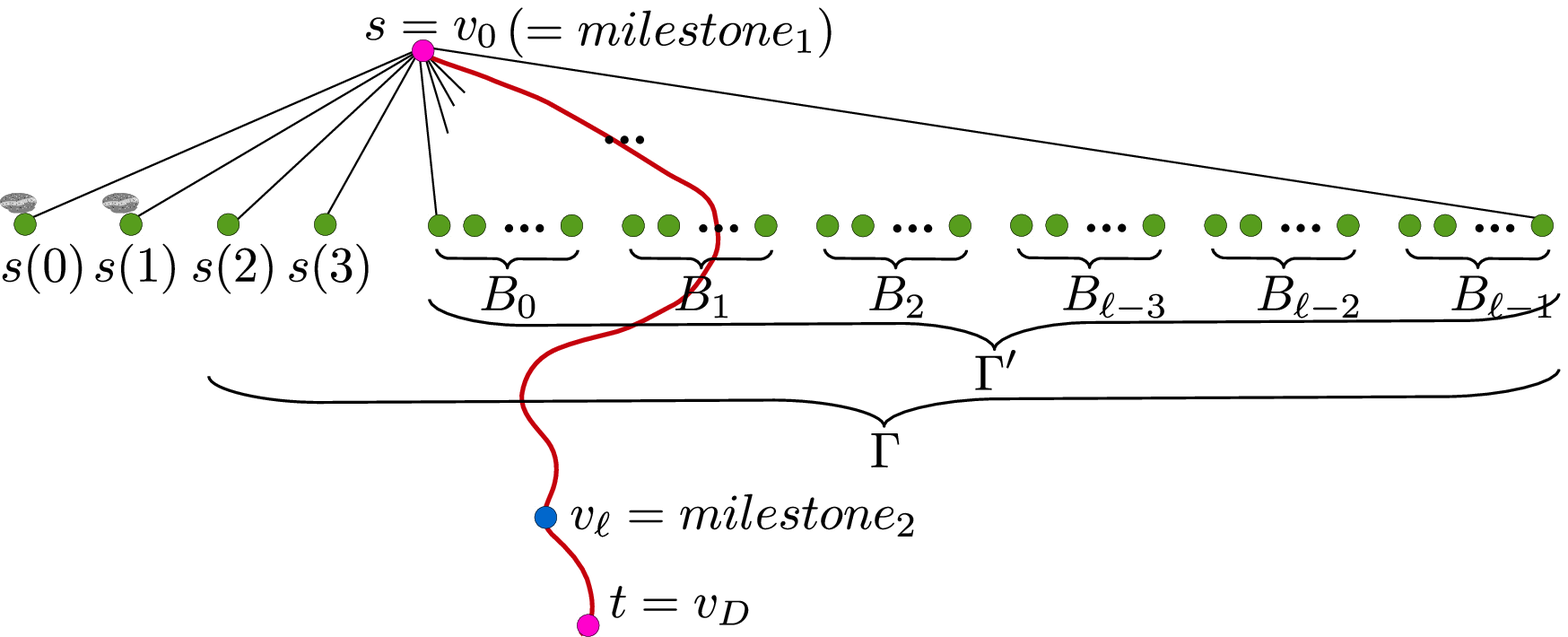}
\label{fig-h-4}
}}
\end{center}
\caption{Showing the pebble placement in Case H. The markers are represented by showing pebbles using gray colored circles above the nodes (a) Placement of pebbles where $milestone_2 \in L_3$ (b) Placement of pebbles where $milestone_2 \in L_4$ (c) Placement of pebbles where $milestone_2 \in L_5$ (d) Placement of pebbles where $milestone_2 \in L_j$ for $j \ge 6$.}
\label{fig-h}
\end{figure}

\item {\bf Case L-H:} In this case, the first milestone is selected as one of the neighbors $w$ of $s$. Note that $w$ may not be on the shortest path from $s$ to $t$. For this reason we encode the path (or a subpath of the path) from $s$ to $milestone_2$ in the neighbors of $w$. The agent, while executing the treasure hunt algorithm, first arrives at $w$, decode the path which is encoded in the neighbors of $w$, returns back to $s$ and then moves according to this learned path.

In order to encode the path from $s$ to $milestone_2$, a similar approach as in Case H can be applied. The sequence of port numbers from $s$ to $milestone_2$  is encoded in the neighbors of $w$. As before, this is done based on the distance from $s$ to $milestone_2$. However, a difficulty arises as $s(0)$ and/or $s(1)$ may also be neighbours of $w$ and there is no way for the agent to learn through which port $w$ is connected to $s(0)$ or $s(1)$. We overcome this difficulty in the following way.

Let $\Gamma$ be the binary string of length $z$ that we want to encode in the neighbors of $w$. Let $N_1(w), N_2(w), \ldots, N_5(w)$ be 5 sets of disjoint neighbors of $w$ and the cardinality of each set is $z$. Since $s(0)$ and/or $s(1)$ may be neighbours of $w$, at least 3 of these 5 neighbour sets of $w$ does not contain either $s(0)$ or $s(1)$. This implies that, if $\Gamma$ is coded in each of these five sets, it is possible to code $\Gamma$ correctly (without interrupting $s(0)$ and $s(1)$ for which pebble placement is already done in Phase 1)
in the nodes of three sets. Hence, while executing the treasure hunt algorithm, if the agent learns all the strings that are coded in these five sets, then $\Gamma$ can be identified as the string that is coded in majority number of sets.
To be specific, the nodes of each of the $N_i(w)$ to encode $\Gamma$, and if $s(0)$ or $s(1)$ appears in this set then skip the corresponding bits of $\Gamma$ while placing pebbles at the time of coding. Another difficulty here is how to make the agent learn where the  set $N_i(w)$ ends and $N_{i+1}(w)$ starts. To overcome this difficulty, instead of simple binary encoding of the sequence of port numbers as explained in Case H, we use a \colb{transformed binary encoding}: replace every `1' by  `11' and every `0' by `10' of $\Gamma$. The advantage of this transformed encoding is that it does not contains the substring 00. We take advantage of this fact to show separation between the sets $N_i(w)$s as follows. First, compute the string $\Gamma$ that has to be encoded in the neighbors of $w$. Compute the transformed encoding $\hat{\Gamma}$. Now, encode $\hat{\Gamma}$ in $N_i(w)$ and next two neighbors after $N_i(w)$ are left blank (no pebbles placed to represent `00'). The agent continues to visit consecutive neighbors of $w$ until it sees two consecutive nodes that does not contains any pebbles. At this point the agent learns that here the set $N_i(w)$ ends. One more difficulty can arise in which the nodes which are left blank may also contain $s(0)$ or $s(1)$. We later show that $\hat{\Gamma}$ still can unambiguously computed by the agent.

The formal description of the placement of pebbles in this case is given below. Let $milestone_2=v_\ell$. Note that as per the definition of milestones, $l \ge 4$.

\begin{enumerate}

\item $\left[\ell=4\right]$: Here the path from $s$ to $milestone_2$ is $v_0,v_1,v_2,v_3,v_4$ and the corresponding sequence of port numbers is $p_0,p_1,p_2,p_3$. Let $\Gamma'= B_0 \cdot B_1 \cdot B_2 \cdot B_3$ and $\Gamma= 11 \cdot \Gamma'$.

\item $\left[\ell=5\right]$: Here the path from $s$ to $milestone_2$ is $v_0,v_1,v_2,v_3,v_4,v_5$ and the corresponding sequence of port numbers is $p_0,p_1,p_2,p_3,p_4$. Let $\Gamma'= B_0 \cdot B_1 \cdot B_2 \cdot B_3\cdot B_4$ and $\Gamma= 10 \cdot \Gamma'$.

\item $\left[\ell=6\right]$: Here the path from $s$ to $milestone_2$ is $v_0,v_1,v_2,v_3,v_4,v_5,v_6$ and the corresponding sequence of port numbers is $p_0,p_1,p_2,p_3,p_4,p_5$. Let $\Gamma'= B_0 \cdot B_1 \cdot B_2 \cdot B_3\cdot B_4 \cdot B_5$ and $\Gamma= 01 \cdot \Gamma'$.

\item $\left[\ell \ge 7\right]$:  Here the path from $s$ to $milestone_2$ is $v_0,v_1, \ldots v_\ell$ and the corresponding sequence of port numbers is $p_0,p_1,\ldots, p_\ell$. In this case, the sequence of port numbers $p_0,p_1,p_2,p_3,p_{\ell-3},p_{\ell-2},p_{\ell-1}$ is coded.
Let $\Gamma'= B_0 \cdot B_1 \cdot B_2 \cdot B_3\cdot B_{\ell-3} \cdot B_{\ell-2} \cdot B_{\ell-1}$ and  $\Gamma= 00 \cdot \Gamma'$.
\end{enumerate}
Let $\hat{\Gamma}$ be the transformed binary encoding of $\Gamma$ and $z$ is the length of $\hat{\Gamma}$.
 Let $N_i(w)$, for $ 1\le i \le 5$, be the set $z$ consecutive neighbors of $w$ starting from the node $w((i-1)(z+2))$. To be specific, $N_i(w) =\{w((i-1)(z+2)), w((i-1)(z+2)+1), \ldots, w(((i-1)(z+2)+z-1)\}$. For each $i$, $1 \le i \le 5$, pebbles are placed at the nodes of $N_i(w)$ as follows. For $1\le a\le z$, if the $a$-th bit of $\hat{\Gamma}$ is 1, then place a pebble at the node $w((i-1)(z+2)+a-1)$ only if $w((i-1)(z+2)+a-1) \not\in \{s(0),s(1)\}$. If $\ell \ge 8$, then place a pebble at each of the nodes $v_{5}, \ldots, v_{\ell-3}$.
 Fig. \ref{fig:case-LH} shows pebble placement in Case L-H.

\begin{figure}[ht!]
    \centering
    \includegraphics[scale=.6]{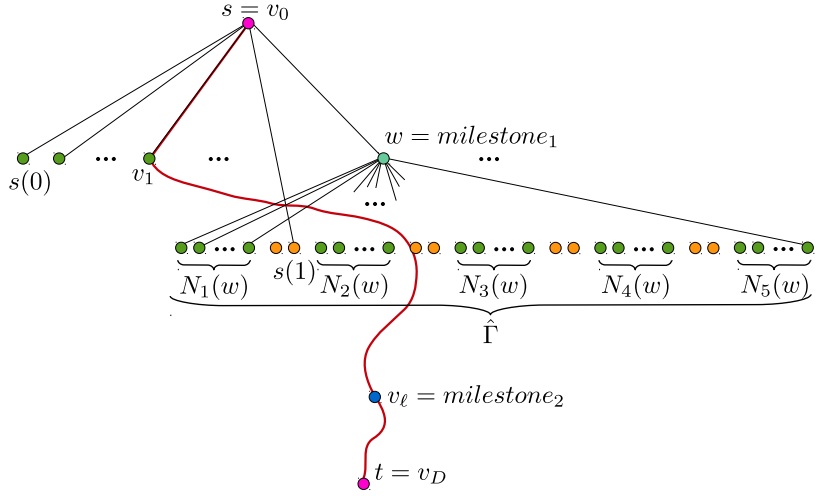}
    \caption{Showing Pebble placement for Case L-H. Here the nodes that are used for pebble placements may be in  $L_0 \cup L_1 \cup L_2$. }
    \label{fig:case-LH}
\end{figure}

\item {\bf Case L-L-H:}
 The placement of pebbles in this case is similar to the placement in Case L-H. The path coded in the neighbors of $w$ is a subpath of the path $P$ starting from $s$.  As before, depending on the position of the  $milestone_2$, different sequences of port numbers are coded in the neighbors of $milestone_1$. Since $milestone_1 \in L_2$ in this case, $s(0)$, $s(1)$ may be connected to $milestone_1$. For this reason, as before, the subpath is coded in five disjoint sets of neighbors of $milestone_1$.

Formal description of placements of pebbles in this case in described below.
 Let $milestone_2=v_\ell$. As per the definition of milestones, $\ell \ge 5$.

\begin{enumerate}

\item $\left[\ell=5\right]$: Here the path from $s$ to $milestone_2$ is $v_0,v_1,v_2,v_3,v_4,v_5$ and the corresponding sequence of port numbers is $p_0,p_1,p_2,p_3,p_4$. Let $\Gamma'= B_0 \cdot B_1 \cdot B_2 \cdot B_3\cdot B_4$ and $\Gamma= 11 \cdot \Gamma'$.

\item $\left[\ell=6\right]$:  Here the path from $s$ to $milestone_2$ is $v_0,v_1,v_2,v_3,v_4,v_5,v_6$ and the corresponding sequence of port numbers is $p_0,p_1,p_2,p_3,p_4,p_5$. Let $\Gamma'= B_0 \cdot B_1 \cdot B_2 \cdot B_3\cdot B_4 \cdot B_5$ and $\Gamma= 10 \cdot \Gamma'$.

\item $\left[\ell=7\right]$: Here the path from $s$ to $milestone_2$ is $v_0,v_1,v_2,v_3,v_4,v_5,v_6,v_7$ and the corresponding sequence of port numbers is $p_0,p_1,\ldots, p_\ell$. In this case, the sequence of port numbers $p_0,p_1,p_2,p_3,p_4,p_5,p_6$ is coded.
Let  $\Gamma'= B_0 \cdot B_1 \cdot B_2 \cdot B_3\cdot B_4 \cdot B_5\cdot B_6$  and  $\Gamma= 01 \cdot \Gamma'$.

\item $\left[\ell \ge 8\right]$:  Here the path from $s$ to $milestone_2$ is $v_0,v_1, \ldots v_\ell$ and the corresponding sequence of port numbers is $p_0,p_1,\ldots, p_\ell$. In this case, the sequence of port numbers $p_0,p_1,p_2,p_3,p_4, p_{\ell-3},p_{\ell-2},p_{\ell-1}$ is coded.
Let $\Gamma'= B_0 \cdot B_1 \cdot B_2 \cdot B_3 \cdot B_4\cdot B_{\ell-3} \cdot B_{\ell-2} \cdot B_{\ell-1}$ and  $\Gamma= 00 \cdot \Gamma'$.
\end{enumerate}

Let $\hat{\Gamma}$ be the transformed binary encoding of $\Gamma$ and $z$ is the length of $\hat{\Gamma}$.
 Let $N_i(w)$, for $ 1\le i \le 5$, be the set $z$ consecutive neighbors of $w$ starting from the node $w((i-1)(z+2))$. To be specific, $N_i(w) =\{w((i-1)(z+2)), w((i-1)(z+2)+1), \ldots, w(((i-1)(z+2)+z-1)\}$. For each $i$, $1 \le i \le 5$, pebbles are placed at the nodes of $N_i(w)$ as follows. For $1\le a\le z$, if the $a$-th bit of $\hat{\Gamma}$ is 1, then place a pebble at the node $w((i-1)(z+2)+a-1)$ only if $w((i-1)(z+2)+a-1) \not\in \{s(0),s(1)\}$. If $\ell \ge 8$, then place a pebble at each of the nodes $v_{5}, \ldots, v_{\ell-3}$.

\begin{remark}\label{rem:1}
 It can be noted that maximum of 8 port numbers must be coded in this case in each of the five disjoint sets of neighbors of $milestone_1$. The transformed binary representation of each port can be of at most $2(1+ \lfloor \log \Delta\rfloor)$ length. Also, two bits are used as marker to represent the distance between first and second milestones. Hence in each set, a string of length $16(1+ \lfloor \log \Delta\rfloor)+2$ is coded. Therefore, over all, among 5 sets, the nodes that are used to code the sequences of port numbers is $80 \lfloor\log \Delta\rfloor+90$.
Also there are two zeros must be coded in between two consecutive sets to show separations between them and at  the end at least 8 nodes kept blank (this is because if $s(1)$ appears in one of the first 4 blank nodes after these 5 sets). Hence the degree of $milestone_1$ must be at least $80 \lfloor\log \Delta\rfloor+106$. Since every milestone is heavy, and $\Delta \ge 2^{10}$ (for $\Delta \ge 2^{10}$, $\Delta \ge 80 \lfloor\log \Delta\rfloor+106$), therefore, such a coding can be done in the neighbors of $milestone_1$.
\end{remark}

\item {\bf Case L-L-L:} This is the case where the two nodes $s(0)$ and $s(1)$ that are used for markers are not connected to $milestone_1$ as $milestone_1 \in L_j$, for some $j \ge 3$. Hence, it is easy to code the paths in the neighbors of $milestone_1$ like Case H. The only difference here is, the subpath coded  is a path starting from $v_{j-3}$. Depending on the position of $milestone_2$, either the sequence of port number corresponding to the entire path from $v_{j-3}$ to $milestone_2$ or the sequence  $p_{j-3},p_{j-2},p_{j-1},p_{j}, p_{j+1},p_{j+2}$, and the last three ports before  $milestone_2$ are coded. For every light nodes  starting from $v_1$ to $v_{j-3}$, a pebble is placed on these nodes that guides the agent towards $milestone_1$. One more difficulty here is the case when $v_1 \in \{s(0),s(1)\}$, as we can not place a pebble in any of the nodes $s(0),s(1)$, otherwise it is not possible to recognize the Case L-L-L.
To overcome this difficulty, we place a pebble at $s$, if $milestone_1 \in L_3$.
      Now, the agent, starting from $s$, first identify the Case L-L-L by not seeing any pebbles in $s(0)$, and $s(1)$. If $s$ contains a pebble, the agent learns that $milestone_1 \in L_3$. It then explores all possible paths of length 3 and finds the node with maximum degree. This node is $milestone_1$. If no pebble is placed at $s$, and  it finds any pebble at a node $s(i)$, $i >1$, then it understands that $s(i)=v_1$. Else, if no pebble is found at any of the nodes $s(2),s(3), \ldots$, and no pebble is present at $s$, then it understands that $v_1 \in \{s(0),s(1)\}$. In this case, the agent first moves to $s(0)$, assuming it is the node $v_1$. If it finds a neighbor of $v_1$, except $s$ that contains a pebble, then it moves to that node (this must be the $v_2$). Else, it moves back to $s$ and then moves to $s(1)$, assuming it as $v_1$. If it finds a neighbor of $v_1$, except $s$ that contains a pebble, then it moves to that node (this must be the node $v_2$). If no pebble is found at any neighbor of $s(1)$ as well, then the agent learns that $milestone_1 \in L_4$. The agent explores all possible paths of length 3 from $s(0)$ and explores all possible paths of length 3 from $s(1)$ and finds the node with maximum degree. The agent moves to this node as this is the node $milestone_1$. The coding in the neighbors of $milestone_1$ is done in the same way as described for Case H.

The formal description of the pebble placement is as follows. Let $j$ be the integer such that $w=milestone_1 \in L_j$ and let $\ell$ be the integer such that $milestone_2=v_\ell$.

\begin{enumerate}

\item $\left[\ell-j=3\right]$: Here the path from $v_{j-3}$ to $milestone_2$ is $v_{j-3},v_{j-2},v_{j-1},v_{j},$ $v_{j+1},v_{j+2},v_{j+3}$ and the corresponding sequence of port numbers is $p_{j-3},$ $p_{j-2},p_{j-1},p_{j},p_{j+1},p_{j+2}$. Let $\Gamma'= B_{j-3} \cdot B_{j-2} \cdot B_{j-1} \cdot B_{j} \cdot B_{j+1} \cdot B_{j+2}$ and $\Gamma= 11 \cdot \Gamma'$.

\item $\left[\ell-j=4\right]$: Here the path from $v_{j-3}$ to $milestone_2$ is $v_{j-3},v_{j-2},v_{j-1},v_{j},$ $v_{j+1},$ $v_{j+2},v_{j+3},v_{j+4}$ and the corresponding sequence of port numbers is $p_{j-3},p_{j-2},p_{j-1},p_{j},p_{j+1},p_{j+2},p_{j+3}$. Let $\Gamma'= B_{j-3} \cdot B_{j-2} \cdot B_{j-1} \cdot B_{j} \cdot B_{j+1} \cdot B_{j+2} \cdot B_{j+3}$ and $\Gamma= 10 \cdot \Gamma'$.
\item $\left[\ell-j=5\right]$: Here the path from $v_{j-3}$ to $milestone_2$ is $v_{j-3},v_{j-2},v_{j-1},v_{j},$ $v_{j+1},v_{j+2},v_{j+3},v_{j+4},v_{j+5}$ and the corresponding sequence of port numbers is $p_{j-3},p_{j-2},p_{j-1},p_{j},p_{j+1},p_{j+2},p_{j+3},p_{j+4}$. Let $\Gamma'= B_{j-3} \cdot B_{j-2} \cdot B_{j-1} \cdot B_{j} \cdot B_{j+1} \cdot B_{j+2} \cdot B_{j+3} \cdot B_{j+4}$ and $\Gamma= 01 \cdot \Gamma'$.

\item$\left[\ell-j\ge 6\right]$: Here the path from $v_{j-3}$ to $milestone_2$ is $v_{j-3},v_{j-2} \ldots v_{\ell}$ and the corresponding sequence of port numbers is $p_{j-3},p_{j-2},\ldots p_{\ell-1}$. In this case, the sequence of port numbers $p_{j-3},p_{j-2},p_{j-1},p_{j},p_{j+1},p_{j+2}, p_{j+3},$ $ p_{\ell-3},p_{\ell-2},p_{\ell-1}$ is coded.
Let $\Gamma'= B_{j-3} \cdot B_{j-2} \cdot B_{j-1} \cdot B_{j} \cdot B_{j+1} \cdot B_{j+2} \cdot B_{j+3} \cdot B_{\ell-3} \cdot B_{\ell-2} \cdot B_{\ell-1}$ and $\Gamma= 00 \cdot \Gamma'$.
\end{enumerate}
Let $\hat{\Gamma}$ be the transformed binary encoding of $\Gamma$ and let $z$ be the length of the string $\hat{\Gamma}$.

For $1 \le a \le z$, place a pebble at $w(a-1)$ if the $a$-th bit of $\hat{\Gamma}$ is 1. If $\ell-j \ge 7$, then place a pebble at each of the nodes $v_{j+4}, \ldots v_{\ell-3}$. Also, for $2 \le i \le j-3$, place a pebble at each of the node $v_i$.  If $j=4$, place a pebble at $s$. If $v_1 \not \in \{s(0),s(1)\}$, then place a pebble at $v_1$.

\end{description}

\noindent {\bf Phase 3: Placing Pebbles to encode paths between other milestones}

The coding of the paths between $milestone_j$ to $milestone_{j+1}$, for $j \ge 2$ are done in the similar fashion as described in Case H. The only difference here is the encoding is done in the neighbors of $milestone_j$ starting from $milestone_j(0)$. For the last milestone, the path from the last milestone to the treasure is coded in the same way as described for the other cases.

Let the total number of milestones be $y$.
For $m =2,3, \ldots, y-1 $, let $v_j$ be the node on $P$ such that $milestone_{m}=v_j$ and $milestone_{m+1}=v_\ell$. For $m=y$, set $\ell=t$.

\begin{enumerate}
\item $\ell-j \le 3$  then $\ell \le j+3$.
The path from $v_j$ to $v_\ell$ is $v_{j},v_{j+1}\ldots v_{\ell}$ and the corresponding sequence of port numbers is $p_{j},p_{j+1}, \ldots,p_{\ell-1}$. Let $\Gamma'= B_j \cdot B_{j+1} \cdots B_{\ell-1}$ and $\Gamma= 11 \cdot \Gamma'$.

 \item $\ell-j=4$  then $\ell=j+4$.
The path from $v_j$ to $v_\ell$ is $v_{j},v_{j+1},v_{j+2},v_{j+3},v_{j+4}$ and the corresponding sequence of port numbers is $p_{j},p_{j+1},p_{j+2},p_{j+2},p_{j+3}$. Let $\Gamma'= B_j \cdot B_{j+1} \cdot B_{j+2}\cdot B_{j+3}$ and $\Gamma= 10 \cdot \Gamma'$.

\item $\ell-j=5$  then $\ell=j+5$.
The path from $v_j$ to $v_\ell$ is $v_{j},v_{j+1},v_{j+2},v_{j+3},v_{j+4},v_{j+5}$ and the corresponding sequence of port numbers is $p_{j},p_{j+1},p_{j+2},p_{j+2},p_{j+3},p_{j+4}$. Let $\Gamma'= B_j \cdot B_{j+1} \cdot B_{j+2} \cdot B_{j+3} \cdot B_{j+4}$ and $\Gamma= 10 \cdot \Gamma'$.

\item If $\ell-j \ge 6 $, and $\ell \ne t$ then let $\Gamma'$ be the binary representation of the sequence of port numbers $p_{j},p_{j+1},p_{j+2},p_{\ell-3},p_{\ell-2},p_{\ell-1}$.  If $\ell=t$ then $\Gamma$ is the binary representation of the sequence of port numbers $p_{j},p_{j+1},p_{j+2}$. $\Gamma=00 \cdot \Gamma'$.
\end{enumerate}

For $1 \le a \le z$, place a pebble at $w(a-1)$ if the $a$-th bit of $\Gamma$ is 1. If $\ell-j \ge 7$,  then place a pebble at each of the nodes $v_{j+4}, \ldots v_{t-3}$. If $\ell=t$, then place a pebble at each of the nodes $v_{t-2}$ and $v_{t-1}$.

\begin{enumerate}
\item $\left[\ell=3\right]$: Here the path from $milestone_1$ to $milestone_2$ is $v_0,v_1,v_2,v_3$ and the corresponding sequence of port numbers is $p_0,p_1,p_2$. Let $\Gamma'= B_0 \cdot B_1 \cdot B_2$ and $\Gamma= 11 \cdot \Gamma'$.

\item $\left[\ell=4\right]$: Here the path from $milestone_1$ to $milestone_2$ is $v_0,v_1,v_2,v_3,v_4$ and the corresponding sequence of port numbers is $p_0,p_1,p_2,p_3$. Let $\Gamma'= B_0 \cdot B_1 \cdot B_2 \cdot B_3$ and $\Gamma= 10 \cdot \Gamma'$.

\item $\left[\ell=5\right]$: Here the path from $milestone_1$ to $milestone_2$ is $v_0,v_1,v_2,v_3,v_4,v_5$ and the corresponding sequence of port numbers is $p_0,p_1,p_2,p_3,p_4$. Let $\Gamma'= B_0 \cdot B_1 \cdot B_2 \cdot B_3 \cdot B_4$ and $\Gamma= 01 \cdot \Gamma'$.

\item $\left[\ell \ge 6\right]$:  Here the path from $milestone_1$ to $milestone_2$ is $v_0,v_1, \ldots v_\ell$ and the corresponding sequence of port numbers is $p_0,p_1,\ldots p_\ell$. As mentioned earlier, the sequence of port numbers $p_0,p_1,p_2,p_{\ell-3},p_{\ell-2},p_{\ell-1}$ is coded in this case.
Let $\Gamma'= B_0 \cdot B_1 \cdot B_2 \cdot B_{\ell-3} \cdot B_{\ell-2} \cdot B_{\ell-1}$ and  $\Gamma= 00 \cdot \Gamma'$.
\end{enumerate}
Let $\hat{\Gamma}$ be the transformed binary encoding of $\Gamma$.
Let $z$ be the length of the string $\hat{\Gamma}$.
For $1\leq i\leq z$, place a pebble on $s(1+i)$, if the $i$-th bit of $\hat{\Gamma}$ is 1.
If $\ell \ge 7$, then place a pebble on each of the nodes $v_{4}, \ldots v_{\ell-3}$.

\subsubsection{Treasure Hunt by The Mobile Agent}

The main idea behind the treasure hunt algorithm executed by the agent is to move from one milestone to the next milestone until the treasure is found. Initially, the agent is at the node $s$. The position of the first milestone is learned by the agent by visiting the two neighbors $s(0)$ and $s(1)$ of $s$. Based on whether a pebble is present in either of these nodes, the agent  determines the location of $milestone_1$ and then moves to the same. According to the pebble placement strategy, the neighbors of every milestone are used to code the path towards the next milestone. From a milestone, the agent visits a set of its neighbors and decode the sequence of port numbers corresponding to the path from the current milestone to the next milestone. Using this information and identifying the pebbles placed at the light nodes along the paths, the agent reaches to the next milestone. This process continues until the treasure is found. The detail description of the treasure hunt algorithm is give below.

The agent follows Algorithm \ref{alg:alg1} to find the treasure. Starting from the node $s$, it first visits the two nodes $s(0)$ and $s(1)$.
If pebbles are found at both the nodes, then the agent follow Algorithm \ref{alg:H}; If a pebble is found at $s(0)$ but no pebble is found at $s(1)$, then the agent follow Algorithm \ref{alg:L-H}; If a pebble is found at $s(1)$ but no pebble is found at $s(0)$, then the agent follows Algorithm \ref{alg:L-L-H}, otherwise the agent follows Algorithm \ref{alg:L-L-L} if no pebbles are found in either of these two nodes.

\begin{algorithm}[ht!]
\SetAlgoLined

{Starting from $s$, the agent visits two nodes $s(0)$, and $s(1)$ one by one and comes back to $s$.}

\If{Both the nodes $s(0)$, and $s(1)$ contains a pebble each}
    {{\textsc{Subroutine\_H}} (Algorithm \ref{alg:H})}
\ElseIf{$s(0)$ contains a pebble and $s(1)$ does not contain any pebble}
    {
    {\textsc{Subroutine\_L-H}} (Algorithm \ref{alg:L-H})
    }
\ElseIf{$s(0)$ does not contain any pebble and $s(1)$ contains a pebble}
    {{\textsc{Subroutine\_L-L-H}} (Algorithm \ref{alg:L-L-H})}

\ElseIf{Neither $s(0)$ nor $s(1)$ contains a pebble }
{{\textsc{Subroutine\_L-L-L} }(Algorithm \ref{alg:L-L-L})}

\caption{\textsc{TreasureHunt}}
\label{alg:alg1}
\end{algorithm}

If two pebbles are found at each of the nodes $s(0)$ and $s(1)$, the agent learns that $s$ is heavy. According to Algorithm \ref{alg:H}, the agent visits the nodes $s(2),s(3),\ldots$ until it found two nodes $s(z+1)$ and $s(z+2)$ such that no pebbles are found in both of these nodes. Let $\hat{\Gamma}=b_2 b_3 \ldots b_z$ be the binary string such that $b_i=1$ if a pebble is found at $s(i)$, else $b_i=0$. Let $\Gamma$ be the string obtained from $\hat{\Gamma}$ by replacing each `11' by '1' and each `10' by 0 of $\hat{\Gamma}$ from left to right, taking two bits a a time. The first two bits of $\Gamma$ represents  the distance of $milestone_2$ from $s$. The agent, knowing the degree of $s$, compute $\alpha_s=1+\lfloor \log deg(s) \rfloor$.  Let  $q_0$ be the integer that is represented by the substring  $\Gamma(3,4+\alpha_s)$. The agent moves along the port $p$ to reach the node $v_1$.  Once it moves to $u_1$, it learns its degree and computes $\alpha_{v_1}$. It then compute the integer $q_1$ that is coded in the substring $\Gamma(5+\alpha_s, 6+\alpha_s+ \alpha_{v_1})$. The agent moves along the port $q_1$ to reach the node $v_2$. The agent continues to move this way distances 3,4,5, if the first two bits represent the markers `11',`10',`01', respectively to reach $milestone_2$. For the marker `00' represented by $s(2)$ and $s(3)$, the agent moves distance 3 as per the above strategy, to reach a node $v_3$. It then visits all the neighbors of $v_3$ and move to the neighbor that contains a pebble. This process continues until for a node, none of its neighbors contains any pebble. In this case, the agent retrieve the next three ports encoded in the rest of the substring of $\Gamma$, one by one and moving to the respective node and move along three edges to reach to $milestone_2$.

\begin{algorithm}[ht!]
\SetAlgoLined

{$CurrentNode=s$.}

{The agent visits the neighbors of $s$ starting from $s(2)$, in the increasing order of the port number through which $s$ is connected to them until it finds two consecutive neighbors where no pebbles are placed.}

{Let $\hat{\Gamma}=b_2 b_3 \ldots b_z$ be the binary string where $b_i$ is 1 if a pebble is found at $s(i)$ and the last two nodes visited in the previous step by the agent are $s(z+1)$ and $s(z+2)$.}

{Let $\Gamma$ be the binary string obtained from $\hat{\Gamma}$ by replacing every `11' by a `1' and every `10' by a `0' from left to right by taking two bits at a time.}

{$CurrentIndex=3$.}

{$MinDistance=3$}

{\textsc{FindNextMilestone($b_2,b_3, \Gamma, MinDistance$)}  (Algorithm \ref{alg:NMS})}

{{\sc Progress}($CurrentNode$) (Algorithm \ref{alg:progress})}
\caption{\textsc{Subroutine\_H}}
\label{alg:H}
\end{algorithm}

If a pebble is found in $s(0)$ but no pebble is found in $s(1)$, the agent learns that $milestone_1 \in L_1$. In this case, it executes \textsc{Subroutine\_L-H}.
The agent visits all the neighbor of $s$ and finds the neighbor $w$ with maximum degree. In case of tie, the agent moves to the node  with maximum degree to which $s$ is connected via smallest port number is . This node $w$ is $milestone_1$. The agent, after moving to $w$ from $s$, starts visiting all the neighbors of $w$ one by one until it finds four consecutive neighbors $w(z),w(z+1)$, $w(z+2)$, and $w(z+3)$ none of which contain any pebble. The agent construct the binary string $\Gamma'=b_0b_1 \ldots b_{z-1}$, where $b_i=1$  if a pebble was found at $w(i)$, else $b_i=0$. This string is split into substrings that are separated by the substring `00' and the agent computes the substring $\hat{\Gamma}$ whose occurrence among these substrings is maximum. Let $\Gamma$ be the string obtained from $\hat{\Gamma}$ by replacing each `11' by `1' and each `10' by `0' of $\hat{\Gamma}$ from left to right, taking two bits a a time.
The agent computes the ports one by one as described for Algorithm \ref{alg:H} and move towards the second milestone. The only difference here is after computing $\Gamma$, the agent comes back to $s$ from $milestone_1$ and the path coded at $\Gamma$ starts from $s$.

\begin{algorithm}[ht!]
\SetAlgoLined
{The agent visits all the neighbors of $s$ and let $w$ be the node that have the maximum degree among all the neighbors of $s$. In case where multiple nodes with maximum degree exists, let $w$ be the node to which $s$ is connected via the smallest port number.}

{The agent moves to $w$. Let $q$ be the incoming port at $w$ of the edge $(s,w)$.}

{The agent visits the neighbors of $w$ in the increasing order of the port number through which $w$ is connected to them until it finds four consecutive neighbors where no pebbles are placed.}

{Go back to $s$ from $w$ using port number $q$. $CurrentNode=s$.}

{Let ${\Gamma'}=b_0 b_1  \ldots b_{z'}$ be the binary string where $b_i$ is 1 if a pebble is found at $w(i)$ and the last four nodes visited in the previous step by the agent are $w(z'), w(z'+1), w(z'+2), w(z'+3)$.}\label{step6}

{Partition $\Gamma'$ into substrings that are separated by two consecutive zeros. Let $\hat{\Gamma}$ be the string that matches with most of these substrings.}

{Let $\Gamma$ be the string obtained from $\hat{\Gamma}$ by replacing every `11' by `1' and every `10' by `0' from left to right by taking two bits at a time. Let $b$ and $b'$ be the first two bits of $\Gamma$. }

{$CurrentIndex=3$. $MinDistance=4$}

{\textsc{FindNextMilestone($b,b', \Gamma, MinDistance$)} (Algorithm \ref{alg:NMS})}

{{\sc Progress}($CurrentNode$) (Algorithm \ref{alg:progress})}
\caption{\textsc{Subroutine\_L-H}}
\label{alg:L-H}
\end{algorithm}

If a pebble is found at $s(0)$ but no pebble at $s(1)$, the agent learns that $milestone_1 \in L_2$. In this case, the agent executes \textsc{Subroutine\_L-L-H} (Algorithm \ref{alg:L-L-H}). From $s$, it explores all possible paths of length 2 from $s$ and moves to the maximum degree node in $L_2$ to which $s$ is connected through the lexicographycally shortest path. After moving to $w$, the agent computes the binary string $\Gamma$ in the same way as described in case of Algorithm \ref{alg:L-H} and proceed towards the second milestone. Here, the string $\Gamma$ that is computed by the agent, codes the path starting from $s$ and the agent, after learning $\Gamma$, moves back to $s$ and moves forward according to subpath coded in $\Gamma$ towards the second milestone.

\begin{algorithm}[ht!]
\SetAlgoLined
{The agent visits all the neighbors of $s$ and let $w$ be the node the maximum degree node in $L_2$ to which $s$ is connected through the lexicographycally shortest path.}

{The agent moves to $w$.}

{ The agent visits the neighbors of $w$ in the increasing order of the port number through which $w$ is connected to them until it finds four consecutive neighbors where no pebbles are placed.}

{Go back to $s$.}

{$CurrentNode=s$.}

{Let $\Gamma'=b_0 b_1 \ldots b_{z'-1}$ be the binary string where $b_i$ is 1 if a pebble is found at $w(i)$ and the last four nodes visited in the previous step by the agent are $w(z'), w(z'+1), w(z'+2), w(z'+3)$.}

{Partition $\Gamma'$ into substrings that are separated by consecutive zeros. Let $\hat{\Gamma}$ be the string that matches with most of these substrings.}

{Let $\Gamma$ be the string obtained from $\hat{\Gamma}$ by replacing every `11' by a `1' and every `10' by a `0' of $\hat{\Gamma}$ by taking two bits at a time from left to right. Let $b,b'$ be the first two bits of $\Gamma$, respectively.}

{$MinDistance=5$. $CurrentIndex=3$}

{\textsc{FindNextMilestone($b,b', \Gamma, MinDistance$)} (Algorithm \ref{alg:NMS})}

{{\sc Progress}($CurrentNode$) (Algorithm \ref{alg:progress})}
\caption{\textsc{Subroutine\_L-L-H}}
\label{alg:L-L-H}
\end{algorithm}

If no pebbles are found at both $s(0)$ and $s(1)$, the agent learns that $milestone_1 \in L_j$ for some $j \ge 3$. It then visits all the neighbors of $s$ and identify the node $v_1$ by finding a neighbor with a pebble. Here, as mentioned in the pebble placement algorithm, the problem occurs when $v_1 \in \{s(0),s(1)\}$, as these two nodes are already used as marker and therefore no pebble can be placed here. In this case, the agent considers both $s(0)$ and $s(1)$ as possible candidates for $v_1$.

Once $milestone_2$ is reached, the agent moves according to Algorithm \ref{alg:progress}. Until the treasure is found, the agent, learn the sequence of port numbers that leads towards the next milestone by visiting the neighbors of the current milestone. Then following this sequence of port numbers and using the pebbles placed on the light nodes, the agent moves to the next milestone. This process continues until the treasure is found.

During the execution of the tresure hunt algorithm, the agent uses a set of global variables, $CurrentNode$, $MinDistance$, and $CurrentIndex$. The variable $CurrentNode$ denotes the node from which the current call of the algorithms are executed. The $MinDistance$ variable stores the integers which is the minimum number of ports that are coded in the neighbor of the current milestone. The $CurrentIndex$ indicates the position of the binary string (represents the sequence of port numbers towards the next milestone) from which the coding of the port number along the shortest path from $CurrentNode$ starts.

The following lemmas ensure the correctness of the proposed algorithm.

\begin{algorithm}
\SetAlgoLined
{$CurrentNode=s$.}

\label{step1}{Visit all the neighbors of $CurrentNode$.}

\If{Treasure is found}{
    Stop and terminate.
}
\If{a pebble found at a node $v$}{
    Move to $v$. Set $CurrentNode=v$. Go to Step \ref{step1}.
}
\Else{
    \If{$CurrentNode \ne s$}{
        \label{step2} Visit all the paths from $CurrentNode$ of length 3. Let $w$ be the node of maximum degree connected to $CurrentNode$ by the lexicographicaly shortest path of length 3.

        Move to $w$. Store the incoming ports of the path from $v$ to $w$ in a stack.
    }
    \Else{
        \If{$s$ contains a pebble}{
            Go to Step \ref{step2}.
        }
        \Else{
            Move to $s(0)$. Let $q$ be the port number of the edge $(s,s(0))$ at $s(0)$. Visit all the neighbors of $s(0)$.

            \If{a pebble is found at a neighbor $v$ of $s(0)$}{
                Move to $v$. Set $CurretNode=v$. Go to Step \ref{step1}.
            }
            \Else{
                Return to $s$ using port $q$ from $s(0)$. Move to $s(1)$. Let $q'$ be the port number of the edge $(s,s(1))$ at $s(1)$. Visit all the neighbors of $s(1)$.

                \If{a pebble is found at a neighbor $v$ of $s(1)$}{
                    Move to $v$. Set $CurretNode=v$. Go to Step \ref{step1}.
                }
                \Else{
                    Move back to $s$. Go to Step \ref{step2}.
                }
            }
        }
    }
}

The agent visits the neighbors of $w$ in the increasing order of the port number through which $w$ is connected to them until it finds two consecutive neighbors where no pebbles are placed.

{Let $\hat{\Gamma}=b_0 b_1 \ldots b_{z-1}$ be the binary string where $b_i$ is 1 if a pebble is found at the node $w(i)$ and the last two nodes visited in the previous step by the agent are $w(z)$ and $w(z+1)$.}

{Let $\Gamma$ be the string obtained from $\hat{\Gamma}$ by replacing every `11' by a `1' and every `10' by a `0' of $\hat{\Gamma}$ by taking two bits at a time from left to right. Let $b,b'$ be the first two bits of $\Gamma$, respectively.}

{Move back to $CurrentNode$ using the path stored in the stack.}

{$CurrentIndex=3$, $MinDiatnace=6$. }

{Algo \textsc{FindNextMilestone($b,b', \Gamma, MinDistance$)}(Algorithm \ref{alg:NMS})}

{{\sc Progress}($CurrentNode$) (Algorithm \ref{alg:progress})}
\caption{\textsc{Subroutine\_L-L-L}}
\label{alg:L-L-L}
\end{algorithm}

\begin{algorithm}[ht!]
\SetAlgoLined
{Let  $p$ be the integer that is represented by the substring constructed from $\Gamma(i,x)$.}

{The agent move from the current node to the node $u$ to which the current node is connected via port $p$.}

\If{Treasure found}{
    Stop and terminate.
}
\Else{
    $CurrentNode=u$. $CurrentIndex=x+1$
}

\caption{\textsc{Movement($i$,$x$, $\Gamma$)}}
\label{alg:movement}
\end{algorithm}

\begin{algorithm}[ht!]
\SetAlgoLined

\If{$a=1$ and $b=1$}{
    \For{$j\gets1$ \KwTo $i$}{
        \textsc{Movement($CurrentIndex$,$\alpha_{CurrentNode}$, $\Gamma$)} (Algorithm \ref{alg:movement})
    }
}
\Else{
    \If{$a=1$ and $b=0$}{
        \For{$j=1$ to $i+1$}{
            \textsc{Movement($CurrentIndex$,$\alpha_{CurrentNode}$, $\Gamma$)}
        }
    }
    \Else{
        \If{$a=0$ and $b=1$}{
            \For{$j=1$ to $i+2$}{
                \textsc{Movement($CurrentIndex$,$\alpha_{CurrentNode}$, $\Gamma$)}
            }
        }
        \Else { \label{step3}
            \For{$j=1$ to $i$}{
                \textsc{Movement($CurrentIndex$,$\alpha_{CurrentNode}$, $\Gamma$)}
            }
            \While{\label{step5} a pebble is found in some neighbor of $CurrentNode$}{
                Move to the neighbor $u$ of $CurrentNode$ that contains a pebble. $CurrentNode=u$
            }
            \For{$i=1$ to 3}{
                \label{step4} \textsc{Movement($CurrentIndex$,$\alpha_{CurrentNode}$, $\Gamma$)}
            }
        }
    }
}

{\textsc{Progress($CurrentNode$)} (Algorithm \ref{alg:progress})}

\caption{\textsc{FindNextMilestone($a,b, \Gamma, i$)}}
\label{alg:NMS}
\end{algorithm}

\begin{algorithm}[ht!]
\SetAlgoLined

{The agent visits the neighbors of $CurrentNode$ starting from $CurrentNode(0)$, in the increasing order of the port number through which $CurrentNode$ is connected to them until it finds two consecutive neighbors where no pebbles are placed.}

{Let $\hat{\Gamma}=b_0b_1 \ldots b_z$  be the binary string where $b_i$ is 1 if a pebble is found at the node connected to $s$ through port $i$ and the last two nodes visited in the previous step by the agent are $s(z+1)$ and $s(z+2)$.}

{Let $\Gamma$ be the binary string obtained from $\hat{\Gamma}$ by replacing every 11 by a 1 and every 10 by a 0 from left to right by taking two bits at a time.}

{Let $b,b'$ be the first two bits of $\Gamma$.}

{$CurrentIndex=3$. $MinDistance=3$}

{\textsc{FindNextMilestone($b,b', \Gamma, MinDistance$)} (Algorithm \ref{alg:NMS})}

{{\sc Progress}($CurrentNode$) (Algorithm \ref{alg:progress})}
\caption{\textsc{Progress($CurrentNode$)}}
\label{alg:progress}
\end{algorithm}

\begin{lemma}
In time $O((dist(s,milestone_1)\log \Delta+\log ^3\Delta))$,  the agent successfully reaches  to $milestone_1$ starting from the node $s$.
\label{lem:milestone}
\end{lemma}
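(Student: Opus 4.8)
The plan is to prove the statement by a case analysis mirroring the four subroutines invoked by \textsc{TreasureHunt}, namely Case H, Case L-H, Case L-L-H and Case L-L-L. For each case I would verify two things: first, that the agent correctly identifies which case it is in by inspecting $s(0)$ and $s(1)$ (pebbles on both $\Rightarrow$ H, only $s(0)\Rightarrow$ L-H, only $s(1)\Rightarrow$ L-L-H, neither $\Rightarrow$ L-L-L) and then actually reaches the node defined as $milestone_1$; second, a bound on the number of edge traversals. The three shallow cases are quick. In Case H, $milestone_1=s$ and $dist(s,milestone_1)=0$, so after the $O(1)$ probing of $s(0),s(1)$ there is nothing left to do. In Case L-H, $dist(s,milestone_1)=1$ and $s$ is light, so $deg(s)=O(\log\Delta)$ and scanning all neighbours of $s$ to select the maximum-degree node $w$ (ties broken by smallest port) costs $O(\log\Delta)$. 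In Case L-L-H, $dist(s,milestone_1)=2$ and, by the case hypothesis, every node of $L_0\cup L_1$ is light; hence the number of length-$2$ paths out of $s$ is $O(\log^2\Delta)$ and exploring them all costs $O(\log^2\Delta)$. In all three cases the cost is $O(\log^3\Delta)$, which is within the claimed bound since $dist(s,milestone_1)\le 2$.

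The substantial case is L-L-L, where $milestone_1\in L_j$ with $j\ge 3$, so $dist(s,milestone_1)=j$. Here I would split the agent's work into two phases. In the first phase the agent walks from $s$ to $v_{j-3}$ by repeatedly scanning the neighbours of the current node and stepping onto the unique pebbled neighbour, using the pebbles placed on $v_1,\dots,v_{j-3}$; the degenerate situation $v_1\in\{s(0),s(1)\}$ is handled by the explicit ``try $s(0)$, then $s(1)$'' branch of \textsc{Subroutine\_L-L-L}, costing an extra $O(\log\Delta)$. The key point is that every node $v_i$ with $i\le j-3$ is light: since $P$ is a shortest path, $dist(v_{i-3},v_i)=3$, so a heavy $v_i$ would witness the defining condition of Case L-L-L at level $i<j$, contradicting the minimality of $j$. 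Consequently each scan costs $O(\log\Delta)$ and the whole first phase costs $O(j\log\Delta)$.

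In the second phase the agent stands at $v_{j-3}$, finds no pebbled neighbour, and therefore explores all paths of length $3$ and moves to the maximum-degree endpoint reached by a lexicographically shortest such path; by the definition of $milestone_1$ as the maximum-degree node of $\{w\in L_j:dist(v_{j-3},w)=3\}$ this endpoint is exactly $milestone_1$, which gives correctness. For the running time I would bound the cost of the depth-$3$ exploration by $deg(v_{j-3})+\sum_{u_1}deg(u_1)+\sum_{u_1}\sum_{u_2}deg(u_2)$, summed over the first- and second-level vertices of the exploration tree. This is $O(\log^3\Delta)$ provided every node within distance $2$ of $v_{j-3}$ is light, so that each of these degrees is $O(\log\Delta)$. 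Adding the two phases yields $O(j\log\Delta+\log^3\Delta)=O(dist(s,milestone_1)\log\Delta+\log^3\Delta)$, as required.

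The hard part is exactly this last structural claim, that the radius-$2$ neighbourhood of $v_{j-3}$ contains only light nodes, and I would establish it from the minimality of $j$ by a level-and-distance argument. For a node $u$ at distance $\le 2$ from $v_{j-3}$ lying in $L_m$ (necessarily $3\le m\le j-1$, as $L_0\cup L_1\cup L_2$ is light), concatenating the subpath of $P$ from $v_{m-3}$ to $v_{j-3}$ with a shortest path from $v_{j-3}$ to $u$ gives $dist(v_{m-3},u)\le (j-m)+dist(v_{j-3},u)$, which equals $3$ when $u$ sits at the top of its admissible level range; then a heavy $u$ would be a node of $L_m$ at distance $3$ from $v_{m-3}$ with $m<j$, contradicting minimality. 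This disposes cleanly of the upward neighbours (a heavy neighbour in $L_{j-2}$ is at distance $3$ from $v_{j-5}$, and a distance-$2$ node in $L_{j-1}$ is at distance $3$ from $v_{j-4}$) and of all of $L_3$ (every node of $L_3$ is at distance $3$ from $s=v_0$, so the absence of a qualifying heavy node at level $3$ forces $L_3$ to be entirely light). The delicate residual cases are the same-level and downward neighbours at levels strictly above $3$, for which the triangle-inequality bound exceeds $3$; controlling these through the BFS-ancestry of $u$ is where the bulk of the care goes, and I expect it to be the main obstacle in turning this sketch into a complete proof.
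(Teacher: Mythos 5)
Your case analysis follows the paper's proof almost verbatim: the three shallow cases are handled identically, and in Case L-L-L you use the same two-phase decomposition (pebble-guided walk to $v_{j-3}$, then exhaustive depth-$3$ exploration), with the same justification that $v_1,\dots,v_{j-3}$ are light because a heavy $v_i$ on the shortest path $P$ satisfies $dist(v_{i-3},v_i)=3$ and would contradict the minimality of $j$. The one place where you go beyond the paper is also where your argument stops: the claim that every node within distance $2$ of $v_{j-3}$ is light, which is what the $O(\log^3\Delta)$ bound on the depth-$3$ exploration actually needs. You correctly observe that the minimality of $j$ only excludes a heavy node $u\in L_m$ when $dist(v_{m-3},u)$ is exactly $3$, and that your triangle-inequality bound $dist(v_{m-3},u)\le(j-m)+dist(v_{j-3},u)$ forces this only for the ``upward'' neighbours (e.g.\ $u\in L_{j-2}$ adjacent to $v_{j-3}$, or $u\in L_{j-1}$ at distance $2$). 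For a heavy node $u\in L_{j-3}$ or $L_{j-4}$ adjacent to $v_{j-3}$ via a cross or backward edge, one only gets $dist(v_{m-3},u)\ge 3$ with possible strict inequality, so nothing in the stated definitions rules such a node out; if one exists, the number of length-$3$ walks from $v_{j-3}$ can be of order $\log\Delta\cdot\Delta^2$ rather than $\log^3\Delta$. As submitted, your proof therefore has a genuine unfilled gap at exactly the step you flag.

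It is worth saying that the paper's own proof does not close this gap either: it simply asserts that exploring all paths of length $3$ from $v_{j-3}$ takes $O(\log^3\Delta)$ time, with no argument that the intermediate nodes of those paths are light. So your write-up is, if anything, more careful about where the difficulty lies; but to complete the lemma one would need either an additional structural hypothesis (for instance, strengthening the choice of $j$ so that no heavy node lies in the radius-$2$ ball of $v_{j-3}$), or a modified exploration that prunes at heavy intermediate nodes, neither of which appears in your sketch or in the paper.
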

\begin{proof}
We prove the lemma for each of the cases, Case H, Case L-H, Case L-L-H, and Case L-L-L, one by one.

\begin{itemize}
\item {\bf Case H:} $s$ is itself $milestone_1$, and therefore the lemma is trivially true.
\item {\bf Case L-H:} $milestone_1 \in L_1$.  As per the definition of $milestone_1$, here the maximum degree node in $L_1$ (in the case of tie, the node to which $s$ is connected by the edge with minimum port number) is $milestone_1$. According to the placement of pebbles, a pebble is placed at $s(0)$ and no pebble is at $s(1)$. According to Algorithm \ref{alg:alg1}, the agent, finding a pebble at $s(0)$ and no pebble at $s(1)$  executes Algorithm \ref{alg:L-H}. It moves to the node whose degree is maximum among all neighbors of $s$ and to which $s$ is connected via the smallest port number. Hence, the agent, following Algorithm \ref{alg:L-H}, reaches $milestone_1$. Note that the time to reach $milestone_1$ is $O(\log \Delta)$, as $s$ is a light node, and therefore, $deg(s)$ in $O(\log \Delta)$.

\item {\bf Case L-L-H:} $milestone_1 \in L_2$.  In this case, as per the definition of $milestone_1$, the maximum degree node in $L_2$  (in the case of tie, the node to which $s$ is connected by a path of length 2 which is lexicographically shortest) is $milestone_1$.
According to the placement of pebbles, a pebble is placed at $s(1)$, and no pebble is at $s(0)$. According to Algorithm \ref{alg:alg1}, the agent, finding a pebble at $s(1)$ and no pebble at $s(0)$, executes Algorithm \ref{alg:L-H}. It moves to the node whose degree is maximum among all nodes which are connected to $s$ by a path of length 2 (in the case of tie, the node to which $s$ is connected by a path of length 2 which is lexicographically shortest). Hence, the agent, by executing Algorithm \ref{alg:L-L-H}, reaches $milestone_1$. Note that the time to reach $milestone_1$ is $O(\log^2 \Delta)$, as $s$ and all the nodes in $L_1$ are light nodes, and therefore the total number of paths from $s$ of length 2 is $O(\log ^2\Delta)$.

\item {\bf Case L-L-L:} $milestone_1 \in L_j$, for some $j \ge 3$. As per the pebble placement strategy, no pebbles are placed on either of $s(0)$ and $s(1)$. Further, if $milestone_1 \in L_3$, then the node with maximum degree in $L_3$ is chosen as $milestone_1$. No pebble is placed in any of the nodes in $L_1$ and $L_2$. Now, after not founding any pebble at $s(0)$ and $s(1)$, if there is a pebble in $s$, then the agent learns that $milestone_2 \in L_3$. According to Algorithm \ref{alg:L-L-L}, the agent explores all possible paths of length 3 and moves to the node with maximum degree (tie is broken as earlier). This ensures that it reaches to $milestone_1$.

If $milestone_1 \in L_j$, for $j \ge 3$, then a pebble is placed at each of the nodes $v_2, \ldots, v_{j-3}$ and a pebble is placed at $v_1$ if $v_1 \not \in \{s(0),s(1)\}$. As per the definition of $milestone_1$ in this case, the node in $L_j$ with maximum degree and at a distance 3 from $v_{j-3}$ is defined as $milestone_1$.
According to the pebble placement algorithm, no pebbles are placed at $s(0)$ and $s(1)$. If $v_1 \not \in \{s(0),s(1)\}$, then a pebble is placed at $v_1$. The agent, according to Algorithm \ref{alg:L-L-L}, starting from $s$ visits all the neighbor of $s$, and moves to $v_1$, which is the node that contains a pebble.
Then from $v_1$, it moves to $v_2$, as this is the only neighbor of $v_1$ that contains a pebble, and continue this process until it reaches to $v_{j-3}$, none of whose neighbors contains any pebbles (except $v_{j-4}$ but this the agent can detect by storing the incoming ports every time).
Therefore, by Step \ref{step2} of Algorithm \ref{alg:L-L-L}, the agent explores all possible paths of length 3 from $v_{j-3}$ and moves to the node that have maximum degree at distance 3 from $v_{j-3}$.
 Hence, the agent successfully reaches to $milestone_1$ in this case as well. The only case remains to show is that when $milestone_1 \in L_j$, for $j \ge 3$ and $v_1 \not \in \{s(0),s(1)\}$. In this case, both of the nodes $s(0),s(1)$ are possible position for the node $v_1$ and the agent explores as per the other case assuming $s(0)$ as $v_1$ once and then $s(1)$ as $v_1$ next. Hence, by a similar argument as mentioned in the previous cases (Cases H, L-H, L-L-H), the agent reaches to $milestone_1$ successfully. Next, we compute the time taken by the agent from $s$ to $milestone_1$. Note that each of the nodes $s, v_1, v_2, \ldots, v_{j-3}$ are light nodes in this case. Since, a pebble is placed at each of these nodes and the agent moves to these nodes one by one by exploring all the neighbors and finding the pebble, the total time taken by the agent is $O((j-3)\log \Delta)$, i.e, $O(dist(s,milestone_1)\log \Delta)$. Once the agent reaches to $v_{j-3}$, it finds the $milestone_1$ by exploring all possible paths of length 3, that will take $O(\log^3) \Delta$-time. Therefore, the total time taken to reach $milestone_1$ is $O((dist(s,milestone_1)\log \Delta+\log ^3\Delta))$.
\end{itemize}
\end{proof}

In the next two lemmas, we prove that from $milestone_1$ the agent reaches to $milestone_2$ for each of the different cases. The first lemma proves this for the Case H and Case L-L-L and the second lemma proves this for the Case L-H and Case L-L-H.

\begin{lemma}
The agent successfully reaches to $milestone_2$ form $milestone_1$ for Case H and Case L-L-L.
\end{lemma}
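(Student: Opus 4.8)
The plan is to show that, once the agent sits at the decoding reference node --- the node $s$ in Case H, and the node $v_{j-3}$ in Case L-L-L, which it reaches by Lemma~\ref{lem:milestone} and the pebbles placed on the light nodes $v_1,\dots,v_{j-3}$ --- it can (i) faithfully recover the string encoded in the neighbours of $milestone_1$ and (ii) walk the decoded route onto $milestone_2$. In Case L-L-L the agent first reaches $milestone_1=w$, reads the billboard at its neighbours, and then returns to $v_{j-3}$ via the stored stack before starting the walk; apart from the reference node, the leading offset of the encoding (port $s(2)$ in Case H versus $w(0)$ in Case L-L-L) and the value of $MinDistance$ ($3$ versus $6$), the two cases are identical, so I would argue them in parallel and flag the few places where the offset matters.

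First I would establish the decoding step. The agent visits the relevant neighbours of $milestone_1$ in increasing port order until it meets two consecutive pebble-free nodes, recording the pebble pattern as $\hat\Gamma$. The key fact is that the transformed encoding replaces each `$1$' by `$11$' and each `$0$' by `$10$', so $\hat\Gamma$ never contains `$00$'; hence the first pair of consecutive empty neighbours is exactly the terminator, and the recovered pattern equals the intended $\hat\Gamma$. Inverting the transform (`$11\to1$', `$10\to0$') yields $\Gamma$, whose first two bits are the distance marker and whose suffix $\Gamma'$ is the concatenation of the port encodings. I would then argue that $\Gamma'$ is self-delimiting: the block $B_i$ encoding $p_i$ out of $v_i$ has length exactly $\alpha_{v_i}=1+\lfloor\log deg(v_i)\rfloor$, and since the agent learns $deg(v_i)$ on arriving at $v_i$ it reads precisely $\alpha_{v_i}$ bits per move, so \textsc{Movement}/\textsc{FindNextMilestone} parse the correct port each step. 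Here I must also check that no stray pebble corrupts $\hat\Gamma$: the encoding neighbours lie at distance $1$ from $milestone_1$, whereas the pebbled light nodes lie at distance $\ge 2$ from it, and the marker pebbles $s(0),s(1)$ are either skipped (Case H) or not neighbours of $w$ at all (Case L-L-L, since $milestone_1\in L_j$ with $j\ge 3$), so the encoding region is clean.

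Next I would run the navigation step as a case analysis on the marker $ab$. For $ab\in\{11,10,01\}$ the whole port sequence is encoded, and \textsc{FindNextMilestone} performs $MinDistance$, $MinDistance+1$, $MinDistance+2$ movements respectively, reading one fresh port per step and landing exactly on $milestone_2$ (at distance $3,4,5$ from $s$ in Case H, and $6,7,8$ from $v_{j-3}$ in Case L-L-L). The substantive sub-case is $ab=00$: the agent first consumes the leading block of encoded ports to reach an intermediate node of $P$, then enters the \texttt{while}-loop that repeatedly steps to a pebbled neighbour, and finally reads the last three encoded ports. I would prove the loop traverses the light-node trail monotonically: each trail node's only pebbled neighbours are its predecessor and successor on $P$ (the disjointness argument of the previous paragraph rules out other nearby pebbles), the agent refuses the stored incoming port and therefore always moves forward, and the successor of the last trail node carries no pebble, so the loop halts exactly at the last pebbled node; the three remaining encoded ports then deliver the agent onto $milestone_2$.

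I expect the main obstacle to be precisely this $ab=00$ branch --- lining up the bookkeeping so that, after the pebble-trail bridge, decoding resumes at the index of the last three ports and the agent neither over- nor under-shoots $milestone_2$. Making $CurrentIndex$ and $MinDistance$ agree with where the pebbled trail begins and ends, and verifying that the boundary light nodes are recognised as the start/end of the trail rather than as extra encoded ports, is the delicate accounting to which the correctness of the whole walk reduces; the short-marker cases and the decoding step are then routine.
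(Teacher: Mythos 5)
Your proposal follows essentially the same route as the paper's proof: recover $\hat{\Gamma}$ from the neighbours of $milestone_1$ using the `00'-free transformed encoding and the two-consecutive-empty terminator, invert the transform, case-split on the two-bit distance marker, and for the `00' marker bridge the gap by decoding the first ports, following the pebbled light-node trail, and decoding the last three ports. The only difference is that you explicitly verify two facts the paper leaves implicit (that no marker or trail pebble can land in the encoding neighbourhood of $milestone_1$, and that each trail node's only pebbled neighbours are its predecessor and successor on $P$), which strengthens rather than changes the argument.
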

\begin{proof}
The proofs for Case H and Case L-L-L are similar. The only difference is that the path coded in the neighbors of $milestone_1$ starts from $milestone_1$ itself in Case H, whereas, in Case L-L-L, the path coded in the neighbors of $v_j=milestone_1$ starts from $v_{j-3}$. We first describe the proof for Case H.

In Case H, recall that $s$ is heavy and therefore $milestone_1=s$. Suppose that $milestone_2 \in L_j$. According to Algorithm \ref{alg:H}, the agent, visits all the neighbors of $s$ one by one, starting from $s(2)$, in the increasing order of the port number until it finds two consecutive neighbors without pebbles. It then computes the binary string  $\hat{\Gamma}=b_2 \cdots b_z$, where $b_i=1$ if a pebble is found in $s(i)$, else $b_i=0$. Then the string $\Gamma$ is constructed from $\hat{\Gamma}$ by replacing every `11' by a `1' and every `10' by `0' from left to right of $\hat{\Gamma}$ by taking two bits at a time.
As per the pebble placement algorithm, $\Gamma$ represents a sequence of port numbers that leads towards $milestone_2$. To be specific, depending on the position of $milestone_2$, the following two cases may happen.

\begin{itemize}
    \item {\boldmath{\bf $\left[milestone_2 \in L_j,~ j\le 6 \right]$:}} In this case, the sequence $p_0,p_1, \ldots, p_j$ is encoded using the pebbles, i.e., $\Gamma'$ is the binary representation of the sequence $p_0,p_1, \ldots, p_j$, where $\Gamma=b_2b_3 \cdot \Gamma'$ and one of $b_2$ and $b_3$ is 1. The agent identify this by finding at least one pebble at  $s(2)$ or $s(3)$. After learning $\Gamma'$, and knowing the degree of $s$, the agent compute the substring containing first $1+ \lfloor \log degree(s)\rfloor$ bits and decode the integer $p_0$. After decoding $p_0$, it moves to the node $v_1$ from $s$ by taking the port $p_0$. As soon as the agent reaches $v_1$, it learns the degree of $v_1$ and then decode the integer $p_1$ which is decoded in the next $1+\lfloor \log deg(v_1) \rfloor$ bits of $\Gamma'$. Following this port from $v_1$, the agent reaches $v_2$ and learns the degree of $v_2$. Continuing this way, the agent decodes the integers $p_2, \ldots,p_{j-1}$ one by one and finally reaches to node $v_j$ which is $milestone_2$.
    \item {\boldmath{\bf $\left[milestone_2 \in L_j, \text{for~} j > 6 \right]$:}} Here the sequence $p_0,p_1,p_2,p_{j-3},p_{j-2},p_{j-1}$ is coded in $\Gamma'$. According to the pebble placement algorithm, there are no pebble at $s(2)$ and $s(3)$. The agent identify this case by not finding any pebbles both at $s(2)$ and $s(3)$. According to Algorithm \ref{alg:H}, the agent set $CurrentIndex=4$ and $Mindistance=3$, and call the subroutine {\textsc{FindNextMilestone$(0,0,\Gamma,MinDistance)$}}. Since $b_2=0$ and $b_3=0$, according to Steps \ref{step3} - \ref{step4}  of  \textsc{FindNextMilestone}, the agent decodes the port numbers $p_0,p_1,p_2$ one by one and then reaches $v_1$, $v_2$, and $v_3$ respectively. Once it reaches $v_3$, it starts exploring all the neighbors of $CurrentNode$ and moves to the node that contains a pebble and update $CurrentNode$ (Ref. Step \ref{step5}). It continues to move in this way until it finds no pebble in the neighborhood of $CurrentNode$ (except the node from where it reaches $CurrentNode$). According to the pebble placement algorithm, this situation arises when $v_{j-3}$ is the $CurrentNode$. From $v_{j-3}$, the agent decodes $p_{j-3}$, from $\Gamma$, and moves to $v_{j-2}$, and decodes $p_{j-2}$ from $\Gamma$ and moves to $v_{j-1}$, and then finally, decodes $p_{j-1}$ and reach to $v_j$ which is $milestone_2$.
\end{itemize}
In Case L-L-L, the proof is similar as in Case H. After learning the sequence $\Gamma$ which is encoded in the neighbors of $milestone_1$, the agent returns back to $v_{j-3}$ and move one by one along $P$ by decoding the port numbers $p_{j-3},p_{j-2}, \ldots$, to reach $milestone_2$. In case $milestone_2$ more than 6 distance apart from $v_j$, the agent first moves to $v_{j+3}$ from $v_{j-3}$ by decoding first 6 ports one by one. After that it's movement is guided by the light nodes from $v_{j+4}$ to $v_{\ell-3}$, where $milestone_2=v_\ell$. Once it reaches to $v_{\ell-3}$, the agent finds no pebble in the neighborhood of $CurrentNode$ which is $v_{\ell-3}$ (except the node from where it reaches $CurrentNode$, i.e., $v_{\ell-4}$).
From $v_{\ell-3}$, the agent decodes $p_{\ell-3}$ from $\Gamma$, and moves to $v_{\ell-2}$, and decodes $p_{\ell-2}$ from $\Gamma$ and moves to $v_{\ell-1}$, and then finally, decodes $p_{\ell-1}$ from $\Gamma$ and reaches to $v_\ell$ which is $milestone_2$.
\label{lem:h}\end{proof}

\begin{lemma}
The agent successfully reaches to $milestone_2$ from $milestone_1$ for Case L-H and Case L-L-H.
\end{lemma}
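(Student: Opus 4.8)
The plan is to follow the same two--stage template as Lemma~\ref{lem:h}: first show that the agent, after arriving at $milestone_1=w$, correctly reconstructs the encoded string $\Gamma$, and then reduce the movement towards $milestone_2$ to the argument already given for Case~H. By Lemma~\ref{lem:milestone} the agent reaches $w$ (the maximum--degree node of $L_1$ in \textbf{Case L-H}, of $L_2$ in \textbf{Case L-L-H}). The genuinely new ingredient here is that $s(0)$ and/or $s(1)$ may themselves be neighbours of $w$, so the copy of $\hat{\Gamma}$ written in the ports of $w$ is partially corrupted. I would therefore isolate the decoding step as the technical heart of the proof and treat the movement step as a corollary of Lemma~\ref{lem:h}.

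For the decoding step I would argue as follows. The string $\hat{\Gamma}$ is written five times, once in each of the disjoint neighbour blocks $N_1(w),\dots,N_5(w)$, with two deliberately blank ports between consecutive blocks and a long blank region (at least $8$ ports, by Remark~\ref{rem:1}) after the fifth block. Because $s(0)$ and $s(1)$ are two fixed nodes, each is joined to $w$ by at most one port, so together they meet at most two of the five blocks; consequently at least three blocks are written without interference and contain an exact copy of $\hat{\Gamma}$. Since $\hat{\Gamma}$ is a transformed encoding it never contains the substring $00$, so in every clean block the observed pebble pattern is literally $\hat{\Gamma}$, and the $00$ gaps let the agent cut the neighbour sequence into blocks (the partitioning step of Algorithm~\ref{alg:L-H}/\ref{alg:L-L-H}). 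As each uncorrupted block yields the same string $\hat{\Gamma}$ while an interfered block yields some other string --- a copy of $\hat{\Gamma}$ with one flipped bit, a pair of merged blocks, or a shorter fragment --- the value $\hat{\Gamma}$ is the unique block string occurring at least three times, so the majority rule selects it; the agent then inverts the transform to obtain $\Gamma$.

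The main obstacle, and the part needing a careful case analysis, is proving that the two possible corruptions cannot derail the segmentation itself. A pebble of $s(0)$ sitting in a gap can destroy a $00$ separator and merge two adjacent clean blocks, while a missing pebble at $s(1)$ can lengthen a blank run. I would verify, against every placement of $s(0)$ and $s(1)$, that (i) no blank run strictly inside the five--block region reaches the stopping threshold, so the agent does not halt before reading all five blocks, and (ii) the terminal region still exhibits a run long enough to trigger the stop even if $s(0)$ breaks it, so the agent correctly detects the end. Both facts rest on the gap lengths, the stopping threshold, and the heavy--degree bound $80\lfloor\log\Delta\rfloor+106$ fixed in Remark~\ref{rem:1}, which guarantees that $w$ has enough ports to host all five blocks together with the separators and the terminal marker. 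Combined with the counting argument above, at least three clean copies survive and the majority decoding returns $\hat{\Gamma}$.

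Once $\Gamma$ is known, the movement step is immediate. The agent returns to $s$ along the stored incoming port and, with $CurrentNode=s$ and $MinDistance$ set to $4$ (\textbf{Case L-H}) or $5$ (\textbf{Case L-L-H}), calls \textsc{FindNextMilestone} and \textsc{Progress} exactly as in Case~H. Since $\Gamma$ encodes the port sequence of the shortest path starting at $s$ --- the first two marker bits giving the distance of $milestone_2$, and hence how many ports are decoded directly versus followed through the pebble--marked light nodes --- the analysis of Lemma~\ref{lem:h} applies verbatim from $s$ onward and brings the agent to $milestone_2$. This completes both cases.
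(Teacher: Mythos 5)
Your proposal follows essentially the same route as the paper's proof: reach $w$ via Lemma~\ref{lem:milestone}, argue that at least three of the five neighbour blocks $N_1(w),\dots,N_5(w)$ carry an uncorrupted copy of $\hat{\Gamma}$ (with the same case analysis over where $s(0)$ and $s(1)$ can fall --- inside a block, in a separator gap, or in the terminal blank region), recover $\hat{\Gamma}$ by majority, and then reduce the walk to $milestone_2$ to the Case~H argument after returning to $s$. The decomposition, the key counting claim, and the reliance on Remark~\ref{rem:1} all match the paper, so no further comparison is needed.
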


\begin{proof}
We prove the lemma for Case L-H. The proof for Case L-L-H is similar.

According to Lemma \ref{lem:milestone}, the agent successfully reaches to the node $w =milestone_1$. After reaching $milestone_1$, the agent computes the binary string $\hat{\Gamma}$ (Ref. Step \ref{step6} of Algorithm \ref{alg:L-H}). According to the pebble placement algorithm, the sequence of port numbers corresponding to the path from $s$ to $milestone_2$ is coded in 5 disjoint sets of neighbors of $milestone_1$. We show that the agent correctly computes this encoding of the path. In this case, a pebble is placed in $s(0)$ and no pebble is at $s(1)$. Consider the following cases.

\begin{itemize}
  \item $s(0), s(1) \not \in \{w(0), w(1), \ldots, w(5z+11)\}$. In this case, $\Gamma$ is the string that is equals to $\hat{\Gamma} 00 \hat{\Gamma} 00 \hat{\Gamma} 00 \hat{\Gamma} 00 \hat{\Gamma}$. Hence, the string separated by `00' which occurs most of the time is $\hat{\Gamma}$. Hence, $\hat{\Gamma}$ in learned correctly by the agent in this case.
  \item $s(0) \in N_i(w)$ and $s(1) \in N_j(w)$ for some $i,j \le 5$. In this case, the string $\hat{\Gamma}$ can not be correctly coded in the sets $N_i(w)$ and $N_j(w)$. To be more specific,  one bit of the string encoded in $N_i(w)$ and  one bit of the string encoded in $N_j(w)$ may differ from $\hat{\Gamma}$. (in case $i=j$, at most two bits can differ from $\Gamma$). Therefore, In this case, the string $\Gamma'$ which is computed by the agent  is the string that is equals to $\Gamma_1 00 \Gamma_2 00 \Gamma_3 00 \Gamma_4 00 \Gamma_5$, where $\Gamma_i, \Gamma_j \ne \hat{\Gamma}$ and $\Gamma_k= \hat{\Gamma}$, for $k \in \{1,2,3,4,5\}\setminus \{i,j\}$. It can be seen that $\hat{\Gamma}$ is the most occurring substring separated by 00, and hence $\hat{\Gamma}$ is correctly computed in this case.
  \item $s(0) \in\{ w(i(z+2)-2),w(i(z+2)-1)$, for some $1 \le i \le 4$, i.e, $s(0)$ (it contains a pebble in this case) is a node that are left blank to represent two consecutive zeros. Without loss of generality, suppose that $s(0) =w(z)$. In this case $\Gamma'$ is the string that looks like $\hat{\Gamma} 10 \hat{\Gamma} 00 \hat{\Gamma} 00 \hat{\Gamma} 00 \hat{\Gamma}$. The substrings that are separated by `00' are $\Gamma_1$, and three substrings equals to $\hat{\Gamma}$, where $\Gamma_1=\hat{\Gamma}10\hat{\Gamma}$. Hence, $\hat{\Gamma}$ is the substring that is the most occurred substring and therefore, $\hat{\Gamma}$ is correctly computed in this case as well.
  \item $s(0) \in \{w(5z+8), w(5z+9), w(5z+10), w(5z+11)\}$. This is the case where the node $s(0)$ appears as one of the four nodes after all the nodes of the five sets $N_i(w)$. Here, the string $\Gamma'$ looks like $\hat{\Gamma} 00 \hat{\Gamma} 00 \hat{\Gamma} 00 \hat{\Gamma} 00 {\Gamma_1}$ where $\Gamma_1$ is a substring whose one of the four end bits corresponds to $s(0)$. In this case as well, $\hat{\Gamma}$ is the most occurred string and therefore the agent correctly learn $\hat{\Gamma}$.
  \end{itemize}

  Once the agent correctly learn $\hat{\Gamma}$, it computes $\Gamma$ by replacing every `11' by `1' and every `10' by `0'. If the first two bits of $\Gamma$ is the substring `11', or `10', or `01' then the agent learns that $milestone_2$ is at  4 or 5, or 6 distance apart, respectively from $s$. The agent returns back to $s$ from $w$ and move one edge at a time by decoding the integers $p_0$, $p_1, p_2, \ldots, p_\ell$, for $\ell=4,5,6$ and then successfully reaches to $milestone_2$. If the first two bits of $\Gamma$ is the substring `00', then agent learns that $milestone_2$ is at least 7 distance apart. The agent returns back to $s$ from $w$, and move one edge at a time by decoding the integers $p_0$, $p_1$, $p_2$, and $p_3$, to reach the node $v_4$. According to the pebble placement algorithm, if $milestone_2 \in L_\ell$, then no pebble is placed on any node in $L_{\ell-2}$, and a pebble is placed at each of the nodes $v_5, \ldots, v_{\ell-3}$. The agent, from $v_4$ moves to $v_5$, then to $v_6$, and so on until the node $v_{\ell-3}$ by moving to the neighbor of the current node that contains a pebble. Once it reaches to $v_{\ell-3}$, it can not find any other neighbor of $v_{\ell-3}$, other than $v_{\ell-4}$ that contains a pebble. The agent learns in this point that it reaches to $v_{\ell-3}$ and then it starts decoding the rest three port numbers one by one and following them, to finally reach to $milestone_2$.

\end{proof}

\begin{lemma}

After reaching $milestone_j$, for some $j \ge 2$, the agent successfully either reaches to  $milestone_{j+1}$, if exists, or finds the treasure.
\end{lemma}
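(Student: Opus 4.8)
The plan is to reduce this to the argument already established in Lemma \ref{lem:h} for Case H, since Phase 3 encodes the path from $milestone_j$ to $milestone_{j+1}$ in the neighbors of $milestone_j$ in exactly the same fashion, the only change being that the encoding starts from port $0$ rather than port $2$. First I would observe that, because consecutive milestones are at distance at least $3$ apart, no two milestones share a neighbor; moreover the two marker nodes $s(0),s(1)$ are relevant only in the neighborhood of $s$, so for $j\ge 2$ the neighbors of $milestone_j$ carry nothing but the intended transformed encoding $\hat{\Gamma}$. Consequently the $5$-set majority trick needed in Cases L-H and L-L-H is unnecessary here, and the decoding is as clean as in Case H.

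Next I would track the agent's execution of \textsc{Progress}($milestone_j$) (Algorithm \ref{alg:progress}). Reading the neighbors of $milestone_j$ starting from $milestone_j(0)$ in increasing port order, the agent halts at the first pair of consecutive pebble-free neighbors; because the transformed encoding contains no substring $00$, this pair is exactly the terminator placed right after the encoding, so the reconstructed $\hat{\Gamma}$, and hence $\Gamma$, is correct. The first two bits of $\Gamma$ form the distance marker, telling the agent whether $milestone_{j+1}=v_\ell$ lies at distance $3,4,5$, or at least $6$. In the first three cases the entire sequence of ports is coded, and by decoding them one at a time and moving accordingly, exactly as in Lemma \ref{lem:h}, the agent walks straight to $v_\ell$. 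In the remaining case the agent decodes $p_j,p_{j+1},p_{j+2}$, reaches $v_{j+3}$, and is then guided by the pebbles placed on the light nodes $v_{j+4},\ldots,v_{\ell-3}$: at each such node the unique pebbled neighbor other than its predecessor is the next node on $P$, and the trail ends precisely at $v_{\ell-3}$, where no forward pebbled neighbor exists; from there the last three decoded ports $p_{\ell-3},p_{\ell-2},p_{\ell-1}$ bring it to $v_\ell=milestone_{j+1}$.

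Finally I would treat the last milestone separately: here $\Gamma$ codes the first three ports from the last milestone toward $t$ together with the distance marker, pebbles are placed on the intervening light nodes up to $v_{t-3}$ and additionally on $v_{t-2}$ and $v_{t-1}$, so the same decode-then-follow-the-trail walk terminates at $t$ and the agent finds the treasure. The main obstacle to make rigorous is the correctness of the end-of-encoding and end-of-trail detections, and in particular verifying that no stray pebble from a neighboring segment can appear among the neighbors of $milestone_j$ or among the neighbors of an intermediate light node; this follows from the distance-at-least-$3$ separation of the milestones, which guarantees that the coding neighborhoods and the light-node trails of different segments are disjoint, so the agent neither mistakes an intermediate node for $milestone_{j+1}$ nor terminates the walk prematurely.
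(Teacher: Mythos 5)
Your proposal is correct and follows essentially the same route as the paper's proof: decode the transformed encoding at the neighbors of $milestone_j$ via \textsc{Progress}, branch on the two-bit distance marker to either decode the full port sequence or decode the first three ports, follow the pebble trail on the intermediate light nodes to $v_{\ell-3}$, and decode the last three ports, with the final milestone handled separately. Your explicit justification that the segments' coding neighborhoods and light-node trails are disjoint (via the distance-at-least-$3$ separation) is a point the paper leaves implicit, but it is the same argument.
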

\begin{proof}
We prove this lemma using induction. By Lemma \ref{lem:milestone}, the agent successfully reaches $milestone_1$.  To prove the base case, we prove that from $milestone_1$, it reaches to $milestone_2$ successfully.

Let $y$ be the total number of milestones. Suppose that the agent successfully reached to $milestone_j$ for $j<y$. It then executes Algorithm \ref{alg:progress} (subroutine \textsc{Progress}) according to which it visits all the neighbors of $CurrentNode$ which is $milestone_j$ and compute the string $\Gamma'$.
As per the pebble placement algorithm, $\Gamma'$ represents a sequence of port numbers that leads towards $milestone_{j+1}$. To be specific, depending on the position of $milestone_{j+1}$, the following cases may happen.

\begin{itemize}
    \item If $milestone_j \in L_\ell$ and $milestone_{j+1} \in L_{\ell'}$, such that $\ell'- \ell \le 6$. In this case, the sequence $p_\ell,p_{\ell+1}, \ldots, p_{\ell'-1}$ is encoded in the neighbors of, i.e., $\Gamma'$ is the binary representation of the sequence $p_0,p_1, \ldots, p_j$, where $\Gamma=b_1b_1 \cdot \Gamma'$ and one of $b_1$ and $b_2$ is 1. The agent identify this by finding at least one pebbles at  $milestone_j(0)$ and $milestone_j(1)$. After learning $\Gamma'$, and knowing the degree of $milestone_j$, the agent compute the substring containing first $1+ \lfloor \log degree(s)\rfloor$ bits and decode the integer $p_\ell$. After decoding $p_\ell$, it moves to the node $v_{\ell+1}$ from $v_\ell$ by taking the port $p_\ell$. As soon as the agent reaches $v_{\ell+1}$, it learns the degree of $v_{\ell+1}$ and then decode the integer $p_{\ell+1}$ which is decoded in the next $1+\lfloor \log deg(v_{\ell+1}) \rfloor$ bits of $\Gamma'$. Following this port from $v_{\ell+1}$, the agent reaches $v_{\ell+2}$ and learns the degree of $v_{\ell+2}$. Continuing in this way the agent finally reaches to node $v_{\ell'}$ which is $milestone_2$.

    \item If $milestone_{j+1} \in L_{\ell'}$, for $ \ell'-\ell> 6$.  In this case, the sequence $p_{\ell}$, $p_{\ell+1}$, $p_{\ell+2}$, $p_{\ell'-3}$, $p_{\ell'-2}$, $p_{\ell'-1}$ is coded in $\Gamma'$. According to the pebble placement algorithm, there are no pebble at $milestone_j(0)$ and $milestone_j(1)$. The agent identify this case by not finding any pebbles at $milestone_j(0)$, and $milestone_j(1)$. According to Algorithm \ref{alg:progress}, the agent set $CurrentIndex=3$ and $Mindistance=3$, and call  {\textsc{FindNextMilestone$(0,0,\Gamma,MinDistance)$}} (Algorithm \ref{alg:NMS}). According to Step \ref{step3}- \ref{step4}  of  Algorithm \ref{alg:NMS}, the agent decodes the port numbers $p_\ell,p_{\ell+1},p_{\ell+2}$ one by one and then reaches $v_{\ell+1}$, $v_{\ell+2}$, and $v_{\ell+3}$, respectively. Once it reaches $v_{\ell+3}$, it starts exploring all the neighbors of $CurrentNode$ and moves to the node that contains a pebble and update $CurrentNode$ (Ref. Step \ref{step5} of Algorithm \ref{alg:NMS}). It continues to move in this way until it finds no pebble in the neighborhood of $CurrentNode$ (except the node from where it reaches $CurrentNode$). According to the pebble placement algorithm, this situation arises when $v_{\ell'-3}$ is the $CurrentNode$. From $v_{\ell'-3}$, the agent decodes $p_{\ell'-3}$ from $\Gamma$, and moves to $v_{\ell'-2}$, and decodes $p_{\ell'-2}$ from $\Gamma$ and moves to $v_{\ell'-1}$, and then finally, decodes $p_{\ell'-1}$ and reach to $v_{\ell'}$ which is $milestone_{j+1}$.
\end{itemize}

Hence using the induction hypothesis, the agent reaches to $milestone_y$. In the neighbors of $milestone_y$, the path from $milestone_y$ to the treasure is coded. If the distance of $t$ from $milestone_y$ is $\le 6$, then the entire path to the treasure is coded in the neighbors of $milestone_y$. The agent, in a similar fashion as described earlier, learn this path by decoding the port numbers one by and one and finally finds the treasure. In case the path to the treasure is of length at least 7, then only first three ports from $milestone_1$ are coded in the neighbors of $milestone_y$. The agent from $milestone_y=v_\ell$ reached to the node $v_{\ell+3}$,  and then it starts exploring all the neighbors of $CurrentNode=v_{\ell+3}$ and moves to the node that contains a pebble and update $CurrentNode$ of Algorithm \ref{alg:NMS}). It continues to move this way and finally finds the treasure and stop.
\end{proof}

We now present our final result in the following theorem.
\begin{theorem}
The agent finds the treasure in $O(D \log \Delta+\log^3 \Delta)$-time.
\end{theorem}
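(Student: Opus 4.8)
The plan is to treat this theorem as a bookkeeping argument built on top of the correctness lemmas already in hand, so the proof cleanly splits into a (nearly free) correctness part and the running-time part, which carries the real content. For correctness I would simply chain the preceding lemmas: Lemma~\ref{lem:milestone} brings the agent to $milestone_1$, Lemma~\ref{lem:h} together with the lemma covering Cases L-H and L-L-H brings it to $milestone_2$, and the final inductive lemma guarantees that from every $milestone_j$ with $j\ge 2$ it reaches $milestone_{j+1}$ or, from the last milestone $milestone_y$, finds the treasure. Hence the agent visits $s, milestone_1, milestone_2, \ldots, milestone_y, t$ in this order and terminates at $t$.

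For the time bound I would set $d_i = dist(s, milestone_i)$ for $1\le i\le y$ and analyze the cost segment by segment. The key structural fact I would isolate first is that a single milestone is decoded in only $O(\log\Delta)$ steps, even though a milestone is heavy and may have up to $\Delta$ neighbours: the agent scans neighbours starting at port $0$ (or port $2$ at $s$) and stops at the end-marker of two (respectively four, in the five-copy encoding) consecutive empty neighbours, and the encoded string $\hat{\Gamma}$ has length $O(\log\Delta)$ since it encodes only a constant number of port labels, each of $O(\log\Delta)$ bits (Remark~\ref{rem:1}). Thus no decoding step ever costs $\Theta(\Delta)$, which is exactly what keeps the scheme near-optimal.

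I would then bound four kinds of cost. First, reaching $milestone_1$, decoding there, and (in Cases L-H, L-L-H, L-L-L) returning to $s$ costs $O(d_1\log\Delta + \log^3\Delta)$ by Lemma~\ref{lem:milestone}; the $\log^3\Delta$ term stems from the one-time exploration of all length-$3$ paths and occurs only here. Second, following the decoded prefix/suffix ports together with the pebble-marked light nodes to reach $milestone_2 = v_{d_2}$ costs $O(d_2\log\Delta)$, each light node contributing $O(\log\Delta)$ because a light node has degree $O(\log\Delta)$; since $d_1\le d_2$ holds in every case ($d_1=0,1,2,j$ while $d_2\ge 3,4,5,j+3$ respectively), the total cost of reaching $milestone_2$ is $O(d_2\log\Delta + \log^3\Delta)$. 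Third, for $2\le j\le y-1$ the segment $milestone_j \to milestone_{j+1}$ costs $O((d_{j+1}-d_j)\log\Delta)$, combining the $O(\log\Delta)$ decode at $milestone_j$ (absorbed since $d_{j+1}-d_j \ge 3$) with the light-node traversal and the constantly many directly decoded ports near each endpoint. Fourth, the final leg $milestone_y \to t$ costs $O((D-d_y)\log\Delta)$.

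Summing and telescoping gives $d_2 + \sum_{j=2}^{y-1}(d_{j+1}-d_j) + (D-d_y) = D$, so the total running time is $O(D\log\Delta + \log^3\Delta)$ (the case $\Delta < 2^{10}$ being handled by the light-node algorithm at cost $O(D\log\Delta)$). I expect the main obstacle to be the careful justification that each heavy milestone is decoded in $O(\log\Delta)$ rather than $O(\Delta)$ steps, and the verification that $d_1\le d_2$ in all four cases so that the possibly off-path first milestone never inflates the bound; once these are in place, the isolation of the single $\log^3\Delta$ term and the telescoping of the distances are routine.
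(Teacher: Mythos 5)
Your proposal is correct and follows essentially the same approach as the paper: the paper's proof likewise charges $O(\log\Delta)$ per milestone decode and per light node on $P$, bounds the post-$milestone_1$ cost by $O(D\log\Delta)$, and invokes Lemma~\ref{lem:milestone} for the initial $O(dist(s,milestone_1)\log\Delta+\log^3\Delta)$ segment. Your version merely makes explicit two points the paper leaves implicit (that a heavy milestone is decoded in $O(\log\Delta)$ rather than $O(\Delta)$ steps, and the telescoping of the inter-milestone distances), which is a welcome refinement but not a different argument.
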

\begin{proof}
There are $O(\log \Delta)$ pebbles that are placed in the neighbors of each milestone and at most one pebble is placed on each of the light nodes on $P$. Since there can be at most $\frac{D }{3}$ milestones and $O(D)$ light nodes, hence the total number of pebbles used is $O(D \log \Delta)$.

The agent visits $O(\log \Delta)$ neighbors of each milestone and all the neighbors of each light node on $P$.  The total time taken to find the treasure from $milestone_1$ is $D(\log \Delta)$. Also, by Lemma \ref{lem:milestone},  the agent reaches to $milestone_1$ in $O(dist(s,milestone_1) \log \Delta+\log^3 \Delta)$-time. Since $dist(s,milestone_1) \le D$, the time for treasure hunt is $O(D \log \Delta + \log^3\Delta)$.
\end{proof}

\section{Lower bound}\label{sec:lb1}

In this section, we show a a lower bound $\Omega(D \log \Delta)$ for time of treasure hunt. To be specific, we construct a class of instances of treasure hunt such that if the time for treasure hunt is `short', then any algorithm using any number of pebbles can not find the treasure within this short time for some instances.

Let $T$ be a complete tree of height $D$ where the degree of the root $r$ and each internal node is $\Delta$. There are $\Delta \cdot (\Delta-1)^{D-1}$ leaves in $T$. Let $p= \Delta \cdot (\Delta-1)^{D-1}$ and  $u_1, \ldots, u_p$ be the leaves of $T$ in lexicographical ordering of the shortest path from the root $r$. For $1 \le i\le p $, we construct an input $B_i$ as follows. The tree $T$ is taken as the input graph, $r$ as the starting point of the agent, and $u_i$ as the position of the treasure.  Let $\cB$ be the set of all inputs $B_i$, $1 \le i\le p $.

Suppose that there exists an algorithm $\cA$ that can solve the treasure hunt in time $t$  for all graphs of maximum degree $\Delta$ and $D$ being the distance between the initial position of the agent and the treasure. The movement of the agent according to  $\cA$ can be viewed as a sequence of port numbers $p_1,p_2, \ldots,p_{t}$, where the values of the port numbers depend only on the placement of pebbles. To be specific, at any time $t'\le t$, the agent takes a port $p$ if a pebble is placed in the current vertex, else it takes a port $p'$. Note that $p$ and $p'$ may be the same port. In other words, depending on the placement of the pebbles, at any step, the agent can take one of the two possible ports, and therefore total at most $2^{{t}}$ possible sequences of port numbers the agent may follow from its initial position to find the treasure. This observation is formally proved in the following lemma.

\begin{lemma}\label{lem:lem1}
For any treasure hunt algorithm $\cA$ taking $t$-time, there are at most $2^{t}$ possible sequences of port numbers the agent may follow for the treasure hunt.
\end{lemma}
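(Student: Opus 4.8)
The plan is to formalize the informal observation made just before the statement: across all instances in $\cB$ the graph $T$ and the start vertex $r$ are fixed, so the only data that can influence the agent's movement is the pebble placement, and at any single node the pebble placement contributes exactly one bit of information (pebble present or absent). I would first isolate what the agent can observe. Upon arriving at a node the agent learns its degree and the incoming port, but since the underlying graph is the fixed complete tree and the start is fixed, both of these quantities are completely determined by the sequence of ports the agent has already taken; they carry no information that depends on the oracle's choice. The single quantity that does depend on the pebble placement is the indicator of whether the current node holds a pebble. Since $\cA$ is deterministic, the port it outputs at a given node is therefore a function of the history of these one-bit pebble indicators alone.

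Next I would make this precise by induction on the step count. Let the agent visit the nodes $n_0=r,n_1,n_2,\ldots$ and let $p_1,p_2,\ldots$ be the ports it chooses, where $n_{i+1}$ is reached from $n_i$ by port $p_{i+1}$. For each $i$ let $c_i\in\{0,1\}$ record whether $n_i$ carries a pebble. The inductive claim is that for every $k$, the prefix $p_1\cdots p_k$ of ports (and hence the visited nodes $n_0,\ldots,n_k$, together with their observed degrees and incoming ports) is a deterministic function of the bit-vector $(c_0,\ldots,c_{k-1})$. The base case is immediate since $n_0=r$ is fixed. For the inductive step, the hypothesis fixes the node $n_k$ at which the agent currently stands; upon reading the single new bit $c_k$, the deterministic rule of $\cA$ outputs a uniquely determined port $p_{k+1}$, and since $T$ and $n_k$ are fixed this determines $n_{k+1}$ as well, closing the induction.

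Finally I would conclude the counting. Applying the induction with $k=t$, the entire length-$t$ port sequence $p_1\cdots p_t$ is a deterministic function of the $t$ bits $(c_0,\ldots,c_{t-1})$. Distinct port sequences therefore require distinct bit-vectors, and there are at most $2^{t}$ such vectors, which yields the bound. I would add a remark that if the agent revisits a node the corresponding indicator is already known, so no new bit is read; this only decreases the number of free bits and so the bound $2^{t}$ remains valid.

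The step I expect to be the main obstacle is the justification in the first paragraph that the degree and incoming port observed at each node contribute no branching: one must argue carefully that, because the graph and starting vertex are identical in every instance of $\cB$, these observations are fully dictated by the topology together with the ports chosen so far, so that the pebble indicator is genuinely the \emph{only} source of per-step variation. Once this is nailed down, the inductive bookkeeping and the final count are routine.
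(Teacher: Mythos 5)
Your proof is correct and follows essentially the same route as the paper's: an induction over the time steps in which the only per-step branching comes from the one-bit pebble indicator at the current node, yielding a factor of $2$ per step and hence at most $2^{t}$ port sequences. Your packaging (the port sequence as a deterministic function of the bit-vector of pebble observations) and your explicit justification that the observed degree and incoming port carry no instance-dependent information are somewhat more careful than the paper's version, but the underlying argument is the same.
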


\begin{proof}
We use induction to prove the above statement. Consider the execution of the algorithm for $t=1$, i.e., the agent traverse exactly one edge. There are two possible cases. If no pebble is placed at the starting position of the agent, the agent must take some specified port  $p$, following $\cA$. If a pebble is placed at the starting node, then the agent must take some specified port $q$ (may be $p=q$). Hence, there are at most one of the two different sequences  of port numbers can be followed by the agent, according to $\cA$. Suppose that, the statement is true for any $t'<t$, i.e., there are at most $2^{t'}$ possible different sequences of port numbers, one which the agent might follow depending on the position of the pebbles. Let $Q$ be a sequence of port numbers among these $2^{t'}$ different sequences. Let $v$ be the position of the agent after following $Q$ starting from the initial position. As argued, in the case for $t=1$, there are two possible port numbers one of which the agent can take in time $t'+1$ depending on whether a pebble is placed or not at the current node $v$. Since this statement is true for every possible sequences of port numbers of length $t'$, this implies that there are at most $2 \times 2^{t'}= 2^{t'+1}$ many possible sequences of port numbers of length $t'+1$. Therefore, the statement is followed by induction.
\end{proof}

The following theorem proves the lower bound result.

\begin{theorem}

There exists a graph $G$ of maximum degree $\Delta~(\ge 2)$ and diameter $D~(\ge 3)$ such that any deterministic algorithm must require $\Omega(D \log \Delta)$-time for the treasure hunt irrespective of the number of pebbles placed on the nodes of $G$.
\end{theorem}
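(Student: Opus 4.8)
The plan is to run a counting argument over the family $\cB=\{B_1,\dots,B_p\}$ already constructed, using Lemma~\ref{lem:lem1} to bound how many of the $p=\Delta\cdot(\Delta-1)^{D-1}$ leaves a single algorithm can reach quickly. Suppose, for contradiction, that a deterministic algorithm $\cA$ solves every instance in $\cB$ in time at most $t$, no matter how the oracle places the pebbles. On each input $B_i$ the agent, guided by whatever pebbles were placed, traverses a walk of length at most $t$ starting from the root $r$; since the treasure sits at the leaf $u_i$ and is discovered only upon arrival, this walk must visit $u_i$.

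Next I would invoke Lemma~\ref{lem:lem1}: across all pebble placements, the agent running $\cA$ can realize at most $2^{t}$ distinct port sequences. Because the tree $T$ and the start $r$ are the same in every instance, each such sequence determines a unique walk of length at most $t$, hence passes through at most $t+1$ vertices and therefore through at most $t$ leaves (the root is not a leaf). Consequently the set of all leaves that $\cA$ can ever reach within time $t$ over the whole family $\cB$ is contained in the union of the leaf-sets of these $2^{t}$ walks, and so has size at most $2^{t}\cdot t$. Since solving $B_i$ forces $u_i$ into this set for every $i$, we obtain
\begin{equation}
2^{t}\cdot t \;\ge\; p \;=\; \Delta\cdot(\Delta-1)^{D-1}. \label{eq:lb}
\end{equation}

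Taking base-2 logarithms in~\eqref{eq:lb} gives $t+\log_2 t \ge \log_2\Delta+(D-1)\log_2(\Delta-1)$, and since $\log_2 t\le t$ for $t\ge 1$ (and $D\ge 3$ already forces $t\ge D\ge 3$), this yields $t\ge \tfrac12\big(\log_2\Delta+(D-1)\log_2(\Delta-1)\big)$. For $\Delta\ge 3$ one checks $\log_2(\Delta-1)\ge \tfrac12\log_2\Delta$, so the right-hand side is $\Omega(D\log\Delta)$. The degenerate case $\Delta=2$ I would dispatch separately: the agent needs at least $D$ edge traversals to reach any node at distance $D$, and since $\log_2\Delta=1$ here, $t\ge D=\Omega(D\log\Delta)$ again.

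I expect the main obstacle to be making the reduction in the second paragraph airtight: one must argue that the at-most-$2^{t}$ port sequences of Lemma~\ref{lem:lem1} are genuinely the \emph{only} behaviours $\cA$ can exhibit regardless of how cleverly the oracle places an unbounded number of pebbles, and that a single sequence serving several instances can only strengthen~\eqref{eq:lb} rather than help $\cA$ evade the count. Some care is also needed with the per-walk leaf bound and with the boundary regimes ($\Delta=2$, and restricting the $\log_2 t\le t$ step to $t$ at least a small constant, which is automatic since $t\ge D\ge 3$).
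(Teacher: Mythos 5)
Your proposal is correct and follows essentially the same route as the paper: the same instance family $\cB$ on the complete tree, the same appeal to Lemma~\ref{lem:lem1} to bound the number of realizable port sequences by $2^{t}$, and the same counting contradiction. The only cosmetic difference is that you use a direct union bound ($2^{t}\cdot t$ reachable leaves must cover all $p$ treasure positions) where the paper phrases the same count via the pigeonhole principle (many inputs share one sequence, which visits too few leaves); your write-up is, if anything, slightly more careful about the per-walk leaf count and the $\Delta=2$ boundary case.
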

\begin{proof}
Suppose that there exists an algorithm that solves the treasure hunt in time at most $\frac{D \log (\Delta-1)}{2}$ for the class of inputs $\cB$. By Lemma \ref{lem:lem1}, there are at most $2^{ \frac{D\log (\Delta-1)}{2}}=(\Delta-1)^{\frac{D}{2}}$ different possible sequences of port numbers the agent can follows. Since $|\cB| =\Delta (\Delta-1)^{D-1}$, therefore, by the Pigeon hole principle, there exist at least $(\Delta-1)^{\frac{D}{2}}$ inputs in $\cB$ for which the agent follows the same sequence of port numbers. Since $\frac{D \log (\Delta-1)}{2} < (\Delta-1)^{\frac{D}{2}}$, for $\Delta \ge 3$, it is not possible to reach the treasure for $(\Delta-1)^{\frac{D}{2}}$ many inputs using the same sequence of port numbers of length $\frac{D \log (\Delta-1)}{2}$. This contradicts the fact that $\cA$ solves the treasure hunt in time at most $\frac{D \log (\Delta-1)}{2}$ for the class of inputs $\cB$. Hence the theorem follows.
\end{proof}

\section{Conclusion}
We propose an algorithm for the treasure hunt problem that finds the treasure in an anonymous graph in $O(D \log \Delta+\log^3 \Delta)$-time. We also prove a lower bound of $\Omega(D \log \Delta)$. Clearly, there is a small gap between the upper and lower bounds, however, the gap is smaller than any polynomial of $\Delta$. A natural open question is to find tight upper and lower bounds for the problem. Another interesting problem is to study trade-off between number of pebble and time for treasure hunt in anonymous networks.

%
%
%
 \bibliographystyle{plain}
\bibliography{main}

\begin{thebibliography}{10}

\bibitem{Alpern2003}
Steve Alpern and Shmuel Gal.
\newblock {\em The theory of search games and rendezvous}, volume~55 of {\em
  International series in operations research and management science}.
\newblock 2003.

\bibitem{Ricardo1993}
Ricardo~A. Baeza{-}Yates, Joseph~C. Culberson, and Gregory J.~E. Rawlins.
\newblock Searching in the plane.
\newblock {\em Inf. Comput.}, 106(2):234--252, 1993.

\bibitem{Beck1970}
A.~Beck and D.J. Newman.
\newblock Yet more on the linear search problem.
\newblock {\em Israel J. Math}, 8:419 -- 429, 1970.

\bibitem{Lucas2018}
Lucas Boczkowski, Amos Korman, and Yoav Rodeh.
\newblock Searching a tree with permanently noisy advice.
\newblock In {\em {ESA}}, pages 54:1--54:13, 2018.

\bibitem{Bonato2011}
A.~Bonato and R.J. Nowakowski.
\newblock {\em The Game of Cops and Robbers on Graphs, American Mathematical
  Society, 2011}.

\bibitem{Bouchard20}
S{\'{e}}bastien Bouchard, Yoann Dieudonn{\'{e}}, Arnaud Labourel, and Andrzej
  Pelc.
\newblock Almost-optimal deterministic treasure hunt in arbitrary graphs.
\newblock {\em CoRR}, abs/2010.14916, 2020.

\bibitem{Bouchard2020}
S{\'{e}}bastien Bouchard, Yoann Dieudonn{\'{e}}, Andrzej Pelc, and Franck
  Petit.
\newblock Deterministic treasure hunt in the plane with angular hints.
\newblock {\em Algorithmica}, 82(11):3250--3281, 2020.

\bibitem{Chung2011}
Timothy~H. Chung, Geoffrey~A. Hollinger, and Volkan Isler.
\newblock Search and pursuit-evasion in mobile robotics - {A} survey.
\newblock {\em Auton. Robots}, 31(4):299--316, 2011.

\bibitem{Demaine2006}
Erik~D. Demaine, SÃ¡ndor~P. Fekete, and Shmuel Gal.
\newblock Online searching with turn cost.
\newblock {\em Theoretical Computer Science}, 361(2):342 -- 355, 2006.

\bibitem{Disser19}
Yann Disser, Jan Hackfeld, and Max Klimm.
\newblock Tight bounds for undirected graph exploration with pebbles and
  multiple agents.
\newblock {\em J. {ACM}}, 66(6):40:1--40:41, 2019.

\bibitem{Emek2015}
Yuval Emek, Tobias Langner, David Stolz, Jara Uitto, and Roger Wattenhofer.
\newblock How many ants does it take to find the food?
\newblock {\em Theoretical Computer Science}, 608:255 -- 267, 2015.

\bibitem{Fricke2016}
G.~M. {Fricke}, J.~P. {Hecker}, A.~D. {Griego}, L.~T. {Tran}, and M.~E.
  {Moses}.
\newblock A distributed deterministic spiral search algorithm for swarms.
\newblock In {\em 2016 IEEE/RSJ (IROS)}, pages 4430--4436, 2016.

\bibitem{GorainP19}
Barun Gorain and Andrzej Pelc.
\newblock Deterministic graph exploration with advice.
\newblock {\em {ACM} Trans. Algorithms}, 15(1):8:1--8:17, 2019.

\bibitem{Artur2009}
Artur J. and Jakub Lopuszanski.
\newblock On the two-dimensional cow search problem.
\newblock {\em Information Processing Letters}, 109(11):543 -- 547, 2009.

\bibitem{Kao1996}
Ming{-}Yang Kao, John~H. Reif, and Stephen~R. Tate.
\newblock Searching in an unknown environment: An optimal randomized algorithm
  for the cow-path problem.
\newblock {\em Inf. Comput.}, 131(1):63--79, 1996.

\bibitem{Langetepe2012}
Elmar Langetepe.
\newblock Searching for an axis-parallel shoreline.
\newblock {\em Theoretical Computer Science}, 447:85 -- 99, 2012.

\bibitem{Langer2015}
Tobias Langner, Barbara Keller, Jara Uitto, and Roger Wattenhofer.
\newblock Overcoming obstacles with ants.
\newblock In {\em {OPODIS}}, pages 9:1--9:17, 2015.

\bibitem{Avery2014}
Avery Miller and Andrzej Pelc.
\newblock Tradeoffs between cost and information for rendezvous and treasure
  hunt.
\newblock In {\em {OPODIS}}, pages 263--276, 2014.

\bibitem{Pelc18arx2}
Andrzej Pelc and Ram~Narayan Yadav.
\newblock Information complexity of treasure hunt in geometric terrains.
\newblock {\em CoRR}, abs/1811.06823, 2018.

\bibitem{Pelc19arx1}
Andrzej Pelc and Ram~Narayan Yadav.
\newblock Cost vs. information tradeoffs for treasure hunt in the plane.
\newblock {\em CoRR}, abs/1902.06090, 2019.

\bibitem{Spieser2012}
K.~{Spieser} and E.~{Frazzoli}.
\newblock The cow-path game: A competitive vehicle routing problem.
\newblock In {\em 2012 IEEE (CDC)}, pages 6513--6520, 2012.

\bibitem{Amnon2014}
Amnon Ta{-}Shma and Uri Zwick.
\newblock Deterministic rendezvous, treasure hunts, and strongly universal
  exploration sequences.
\newblock {\em {ACM} Trans. Algorithms}, 10(3):12:1--12:15, 2014.

\end{thebibliography}
\end{document}